\documentclass[aps,prd,twocolumn,superscriptaddress,nofootinbib,longbibliography]{revtex4-2}
\pdfoutput=1

\usepackage{amsmath,amssymb}
\usepackage{amsthm}
\usepackage[utf8]{inputenc}
\usepackage[english]{babel}
\usepackage{graphicx}
\usepackage{dcolumn}
\usepackage{bm}
\usepackage{braket}
\usepackage{mathtools}
\usepackage{url}
\usepackage{soul}
\usepackage{array}
\usepackage[colorlinks,citecolor=blue,linkcolor=blue,urlcolor=blue, breaklinks=true]{hyperref}
\usepackage{float}
\usepackage{orcidlink}
\makeatletter
\AtBeginDocument{\let\LS@rot\@undefined}
\makeatother

\newcommand{\SN}{\mathrm{S}_N\xspace}
\renewcommand{\S}{\mathrm{S}\xspace}
\renewcommand{\H}{\mathcal{H}\xspace}

\renewcommand{\S}{\mathrm{S}\xspace}
\renewcommand{\H}{\mathcal{H}\xspace}

\newcommand{\NN}{{\mathbb N}\xspace}

\newcommand{\kin}{\mathrm{kin}\xspace}

\DeclareMathOperator{\Inv}{Inv}

\newcommand{\matrixel}[3]{{\langle #1 | #2 | #3  \rangle}}

\usepackage{xcolor}
\usepackage{xspace}
\usepackage{ifthen}
\graphicspath{{/}}

\newcommand{\showcomments}{true}

\newcommand{\andrea}[1]%
{\ifthenelse{\equal{\showcomments}{true}}%
{{\color{orange}{\small \textbf{A:} #1}}}{\xspace}}%

\newcommand{\marios}[1]%
{\ifthenelse{\equal{\showcomments}{true}}%
{{\color{blue}{\small \textbf{M:} #1}}}{\xspace}}%

\newcommand{\eugenio}[1]%
{\ifthenelse{\equal{\showcomments}{true}}%
{{\color{purple}{\small \textbf{Eu:} #1}}}{\xspace}}%

\newcommand{\emil}[1]%
{\ifthenelse{\equal{\showcomments}{true}}%
{{\color{red}{\small \textbf{Em:} #1}}}{\xspace}}%

\newtheorem*{theorem*}{Theorem}

\newtheorem*{definition*}{Definition}

\begin{document}

\begin{abstract}

\noindent We show that the problem of observables can be fully resolved for background independent theories defined on graphs, through the explicit construction of complete observables. The appropriate analogue of coordinate independence is argued to be the invariance under changes of graph labels, a kind of permutation invariance. Invariants are formed by group averaging and they probe the entire graph---they are global. Strikingly, sets of complete observables can be constructed so that each of the invariants comprising them seeks a connected subgraph structure---local correlations. Geometrical information is fully encoded through this subtle interplay of global and local graph notions, a behavior we term glocal.  This provides physically meaningful complete sets of observables for discrete general relativity, and a permutation invariant reformulation of the spin networks state space of loop quantum gravity. Our analysis reveals an important new aspect of the problem of observables, demonstrating a deep connection between the theory of spacetime and computational complexity theory: the construction of a complete set of observables for discrete spacetime theories is computationally costly, as it corresponds to solving a graph isomorphism problem.

\end{abstract}

\author{Emil Broukal\,\orcidlink{0009-0000-6488-9927} }
\affiliation{Institute for Quantum Optics and Quantum Information (IQOQI) Vienna, Austrian Academy of Sciences, Boltzmanngasse 3, A-1090 Vienna, Austria}
\affiliation{Basic Research Community for Physics e.V., Mariannenstraße 89, Leipzig, Germany}

 \author{Andrea {Di Biagio}\,\orcidlink{0000-0001-9646-8457}}
 \affiliation{Institute for Quantum Optics and Quantum Information (IQOQI) Vienna, Austrian Academy of Sciences, Boltzmanngasse 3, A-1090 Vienna, Austria}
 \affiliation{Basic Research Community for Physics e.V., Mariannenstraße 89, Leipzig, Germany}

\author{Eugenio Bianchi\,\orcidlink{0000-0001-7847-9929}}
 \affiliation{Institute for Gravitation and the Cosmos, The Pennsylvania State University, University Park, Pennsylvania 16802, USA}
 \affiliation{Department of Physics, The Pennsylvania State University, University Park, Pennsylvania 16802, USA}

 \author{Marios Christodoulou\,\orcidlink{0000-0001-6818-2478}}
 \affiliation{Institute for Quantum Optics and Quantum Information (IQOQI) Vienna, Austrian Academy of Sciences, Boltzmanngasse 3, A-1090 Vienna, Austria}

\date{\today}

\title{Observables are glocal}
\maketitle

 Since mathematics is a language, there will be redundancy of description: the same thing can be said in many different ways. Much has been discussed of what are the mathematical objects within a physical theory we would like to understand as `true things'. It seems natural to demand that they remain the same irrespectively of the descriptive convention we choose. 

This idea was a guiding insight in the search to find the theory of general relativity \cite{sep-spacetime-holearg}. Defining solutions to be the stationary points of a diffeomorphism-invariant action gives a definition of the theory invariant under changes of coordinates. Then, it seems immediate that invariant functions on the space of solutions, traditionally called the \emph{observables}, should be designated as the `true things'. This leads to a host of issues, identified a long time ago \cite{Old0,Old1,Old2,BG,Obse,Old3,stachel_hole_2014,Iftime:2005gs}, known as `the problem of observables', as observables are badly defined  in the context of differential geometry \cite{Torre, Neg,smolin1992learnstudynonperturbativequantum,Dittrich:2015vfa}. In fact, recently, observables that could tell apart any two different spacetimes that are solutions of general relativity---complete observables---were shown to be (Borel) non-constructible \cite{Panagiotopoulos_2023}.

Of course, invariant functions on the space of solutions are not the only way to work with a diffeomorphism invariant theory. General relativity describes a multitude of intricate physics to minute detail, despite its problem of observables. Given the difficulties, it may seem there is nothing useful to be learned by investigating these invariants. Indeed, in many contexts it will be more useful to work with gauge-fixed quantities and many alternative ways to define `observable quantities' in general relativity have been discussed 
\cite{New0,New1,New2,New4,New5,Goeller:2022rsx}.

However, the difficulties seem to be issues native to the continuum. What if we consider theories where the substratum is not a manifold but a discrete structure, like a graph?  That is, a discrete topological structure with weights assigned to its edges and nodes, similar to fields that live on top of the manifold. Does the `problem of observables' then vanish? How might this relate to the fact that the invariants of general relativity are global, in the sense that they must be functionals of the spacetime metric on the entire manifold \cite{Old2, New3, Donnelly_2016, Donnelly:2016rvo, smolin1992learnstudynonperturbativequantum, adlam2024observerelationalobservables, rovelli2022philosophicalfoundationsloopquantum, New5}? The implication is that \emph{all local information can be completely encoded into global objects}. How precisely can (sets of) global objects \emph{completely} capture local information?  

With these questions in mind, we pose and study the problem of observables in the context of background-independent theories on graphs. We first argue that the analogue of invariance under changes of coordinates for discrete geometries is the invariance under permutations of node labels (Sections \ref{ref:LabelIndependence} and \ref{sec:firstSign}). The induced action preserves the graph structure, the adjacency relations between the weights, leaving physical information invariant. 

Observables are then provided by algebraic graph invariants, \emph{global} objects. Complete sets of observables can be constructed, so that each is encoding information about a type of connected correlation, \emph{local} features (Sections \ref{sec:unlabelledgraphs}, \ref{sec:InvGraphPols} and \ref{sec:glocal}). Therefore, \emph{all} geometrical information on a weighted graph can be captured through objects that globally encode local information---which we will call \emph{glocal} observables---and from which all other observables can be constructed.

 We apply the formalism to discrete general relativity, and to the spin networks state space of loop quantum gravity (Section \ref{regge calc}), demonstrating that complete sets of glocal observables can be constructed using invariant graph polynomials. These constructions demonstrate a novel and deep connection between general relativity and computational complexity theory: the problem of observables posed in a discrete setting concerns the same set of issues as deciding whether two graphs are the same up to isomorphism.

\section{Label independence}
\label{ref:LabelIndependence}
We begin by making the case for a formal analogy between label independence when working with graphs and the invariance under changes of coordinates when working with continuous geometries.

\subsection{Why diffeomorphisms?}

In what sense do diffeomoprhisms correspond to \emph{all} the coordinate changes? If we allow \emph{more} general kinds of coordinate systems, does that imply there are \emph{fewer} physical observables? Allowing for a \emph{larger} group, for instance bijections that are two times differentiable or even piecewise distributional \cite{Steinbauer:2006qi,Bahr:2007tm}, would result in a \emph{smaller} set of invariants compared to allowing for only smooth maps, because the latter are a subgroup of the former. 

Diffeomorphisms can be understood as a subgroup of a much larger group: the bijective maps on the manifold. By definition, these are the permutations on the set of points of the manifold. Consider some manifold with coordinates such that a unique set of reals is assigned to every point. Application of an arbitrary bijective map would again result to each point being mapped to a unique set of reals, a `new choice of labels'. Of course, an arbitrary assignment of labels to points is not a coordinate system and is not going to be useful in any sense. It would break not just differentiability but also measurability, taking us firmly out of the field of applicability of differential geometry.

Nevertheless, smoothness, and even measurability of the mathematical object we use to model physics, are not in any clear sense a priori requirements of a physical principle. They are certainly very useful but we can ask: are they just  a mathematical convenience, technical prerequisites for using the apparatus of differential geometry? Since physically meaningful quantities \emph{cannot} depend on an arbitrary choice of what label we wish to attach to each point of the manifold, should we then be taking the `true things' to be, in principle, the invariants on the space of metrics under all possible labelings generated by arbitrary bijections of the manifold to itself?  

This question is probably impossible to precisely formulate in the continuum. However, we can ask it in the case of discrete physics. Considering ‘all possible labelings’ on a discrete structure is most natural.

 \subsection{The discrete}

Famously, Lagrange showed that in the infinite limit of a chain of masses coupled with springs we arrive at the wave equation: one differential equation rather than a vast number of difference equations \cite{Lagrange}. This demonstrates that while it can be very convenient, physics need not be considered founded on a continuum substrate: the continuum may be a convenient approximation of a discrete world, rather than vice versa. Continuum spacetime is suspected to be emergent from underlying discrete, combinatorial physics \cite{Penrose,Rovelli:1995ac,linnemann2018hintsa,Oriti_2021}. When one works with a discrete theory, the natural symmetries of the mathematics at hand will not relate to notions such as smoothness, native to the continuum. 

Of course, like Lagrange, we may want to do things in a discrete setting \emph{so that} a continuum limit can be arrived at. While in many contexts this can be the goal, formulating a discrete theory having in mind a continuum limit can come at the cost of losing sight of a more clear picture.

In the continuum, the mathematical substratum is typically a manifold. In the discrete, a mathematical substratum with a notion of neighborhood is a graph. We will be giving labels from the naturals to the nodes of a graph. These are the allowed descriptive conventions. Then `true things' should be invariant under \emph{any} such arbitrary choice.  
 
\subsection{Physics without coordinates?}

The above resonates with Regge's eminent work \cite{Regge:1961px}, where it was suggested that general relativity can be defined as a theory of discrete geometry. In Regge calculus, see Section \ref{sec:Regge}, the spacetime metric is encoded in lengths assigned to the edges of a simplicial complex and deficit angles correspond to curvature. 

Regge titled his seminal paper `General relativity without coordinates'. In a sense, this misses the point. Certainly, Regge's idea did not involve \emph{continuous} coordinates. The theory is defined natively on the simplicial complex, which is not embedded in an ambient manifold. The simplicial complex in Regge's version of general relativity \emph{replaces} the manifold. Then, it seems like there are no coordinates in Regge calculus. However, the explicit construction of a generic graph still requires the use of a sort of coordinates, namely a labeling: we begin by setting out the convention `let there be nodes $1,2,3 \dots$'. 

Changes of labels are well recognized as a discrete remnant of continuous changes of coordinates in the program of causal sets \cite{Rideout:1999ub}, where the `allowed' changes of labels are adapted to a partial order. But, we should be able to label points freely: what difference does it make whether we call a point `the point $6$' or `the point $42$'? Changing a labeling convention cannot change the physics. Surely, mathematical points should be taken as indistinguishable.

 In the discrete, the principle analogous to diffeomorphism invariance should in some sense be  the invariance under all changes of labels \cite{Arrighi:2020xnd}. The philosophical case for demanding permutation invariance in theories of discrete spacetime geometry is made in \cite{stachel_hole_2014,Iftime:2005gs}, nicely summarized in \cite{french2003understanding} as 
\begin{quote}
    $\dots$ the status of permutation invariance, from this perspective, is that of one of the fundamental symmetry principles which effectively binds the ‘web of relations’ constituting the structure of the world $[\dots]$, any theory that demands the complete indistinguishability of its fundamental objects requires invariance under the full permutation group for discrete symmetries or the diffeomorphism group for continuous symmetries.
\end{quote} 
  As we will see, these ideas can be implemented concretely. The global objects that capture all local information on a graph in a way that is invariant under arbitrary changes of labels can be systematically constructed.

\section{First sign of glocality}
\label{sec:firstSign}
Before working in full generality, in this Section we work out a few simple examples. We illustrate purely global observables as permutation-invariant functions and show that they can not uniquely determine the state of a graph. Then, we give a first example of a non-trivial (not purely global) glocal observable.

\subsection{The empty graph}
\label{sec:emptyGraph}

Observables with a purely global character suffice to fully characterize the trivial case of the empty graph, a set of points with no adjacency relations---no locality. Take $N$ variables assigned on the nodes through a weighting $g$, a function that given an arbitrary labeling of the nodes assigns to each node $i$ the weight $g_{ii}$. To encode the weight information independently of labels, the $N$ node weights can be traded for the $N$ variables
 \begin{equation}
 \label{eq:powersumNodes}
O^V_K[g] = \sum^N_{i=1} g_{ii}^K 
 \end{equation}
 with $K=1,\ldots, N$. As functions of the node weights, they are invariant under permutations $\sigma$ of the node labels. That is, they satisfy
 \begin{equation}
 \label{eq:globalNodes}
    O^{V}_K(g_{11},\dots, g_{NN}) = O^{V}_K(g_{\sigma(1)\sigma(1)},\dots, g_{\sigma(N)\sigma(N)}),
\end{equation}
for all $\sigma\in\SN$. The $O^V_K$ for $K=1,\dots, N$ form a minimal generating set of all polynomials invariant under this action of permutations, which in turn are dense in the space of all functions that transform as in \eqref{eq:globalNodes}. The set of $N$ values $g_{ii}$ are uniquely reconstructed up to labeling from the $N$ invariants $O^V_K$ by solving for them.\footnote{\relpenalty=0 \binoppenalty=0 For example, take $N=2$. Then $\{g_{11},g_{22}\}=\{\frac{1}{2}(O^V_1-\sqrt{2O^V_2-(O^V_1)^2}),\frac{1}{2}(O^V_1+\sqrt{2O^V_2-(O^V_1)^2})\}$, so one recovers the variables $g_{11},g_{22}$ up to their labels.}

Imagine that the weighting $g$ is a configuration of a sort of `discrete field' living on the empty graph. There is no connectivity, no locality. Then, the $O^V_K$ provide a complete set of observables. Observables of this kind, and any observable that can be constructed from them, capture only purely global information. 

\subsection{The triangle}
Another trivial case where only global information needs to be captured is the geometry of a triangle, a graph with three nodes weighted with lengths on the edges. A triangle $t$ is \emph{fully} defined\footnote{The edge lengths of a Euclidean triangle satisfy triangle inequalities, while those of a Lorentzian triangle do not.} by the set of its edge lengths $\{a,b,c\}$. Two triangles can be distinguished by observables  
  \begin{equation}
 \label{eq:powersumEdges}
O^E_K[t]  = \sum_{e\in E} g_{e}^K
 = a^K+b^K+c^K
 \end{equation}
for $K=1,2,3$. These are of similar form to \eqref{eq:powersumNodes}, where now instead of summing over all node labels the sum is over all the edge labels $E=\{12,13,23\}$. $O^E_1,O^E_2,O^E_3$ form a set of complete observables on the space of triangles, which is also a minimal generating set of all algebraic invariants on that space:
the values of these three polynomials uniquely specify the set $\{a,b,c\}$. For any two non congruent triangles, at least one of them will yield a different number. 

\subsection{Purely global information is not enough}
\label{sec:GlobalToLocal}Beyond the trivial cases of an empty graph and the geometry of a triangle, in addition to the purely global observables, we will also need observables that can capture local information. Indeed, most information on an arbitrary graph is \emph{not} purely global, it concerns adjacency. The observables $O^V_K$ and $O^E_K$ discussed previously---and all observables constructed from them by algebraic and analytic operations--- do not capture any local information. These invariants, which we have called \emph{purely global observables}, are of a quite particular form. We can define them as those that satisfy, for any two fixed indices $i,j$,
\begin{equation}
\label{eq:PurelyGlobalObservables}
    O\big[g\big] = f\bigg(\sum_{\sigma \in \S_N} \phi(g_{\sigma(i)\sigma(j)})\bigg),
\end{equation}
for some functions $f,\phi$ (note that the left hand side does not depend on $i,j$, since it is an invariant). Here $g_{ij}$ are node weights if $i=j$ or edge weights if $i\neq j$, with $\S_N$ the group of permutations on $N$ elements. Remarkably, for any \emph{unordered} set of $N$ variables, it can be proven that all $\S_N$ invariants \emph{must be} of this form \cite{zaheer2018deepsets}. 

Most of the information of a general discrete geometry is not purely global, it is \emph{adjacency information---local correlations}. To be convinced that purely global observables cannot encode any of the graph connectivity we can take all node weights to be 1 and all edge weights to be 1 or 0, imagining that a 0 edge weight `removes' an edge. Then,  $O^V_K=|V|$ and $O^E_K=|E|$ count the number of nodes and the number of edges. Accordingly, the observables in \eqref{eq:PurelyGlobalObservables} will be a function of $|V|$ if $i=j$ and a function of $|E|$ if $i \neq j$. 

The action of permutations on the node labels naturally extends to an action on the edge labels that preserves local, relational information in the weights---what weight is next to which. As a function of the set of weights $g_{ij}$, a general invariant or \emph{observable} satisfies
\begin{equation}
\label{eq:Observables}
    O\big[g_{ij}\big] = O\big[g_{\sigma(i)\sigma(j)}\big]. 
\end{equation}
 That is, observables are invariant under the action of permutations on $N$ elements acting on the labels of $\frac12 N(N+1)$ variables, namely the $N$ node weights and $\frac12 N (N-1)$ edge weights. This far-from-trivial action of permutations yields an intricate structure of invariants. Finite sets of such invariants encode the information of a finite weighted graph. A taste of how this is done can be seen in the case of the tetrahedron.

\subsection{The tetrahedron: glocal information}
\label{ref:tet}
Unlike the triangle, a tetrahedron is not uniquely specified by its set of edge lengths alone. The geometry of a tetrahedron is defined by its six edge lengths \emph{and the adjacency relations between the edges}.\footnote{Six positive reals assigned on the edges of the complete graph with four nodes, \emph{along with their adjacency information}, yield either a Euclidean or a Lorentzian tetrahedron.} Consider the two tetrahedra $T_{ab}$ and $T_{ba}$ depicted in Figure \ref{fig:two tetrahedra}. These are clearly two different geometries. They have three equal edge lengths $a$ and three equal edge lengths $b$, but in one case there is a triangle with all edges $a$ and in the other there is a triangle with all edges $b$. The multiset of six edge lengths $[a,a,a,b,b,b]$ does not distinguish $T_{ab}$ or $T_{ba}$. 
\begin{figure}[H]
    \centering
    \includegraphics[width=0.9\linewidth]{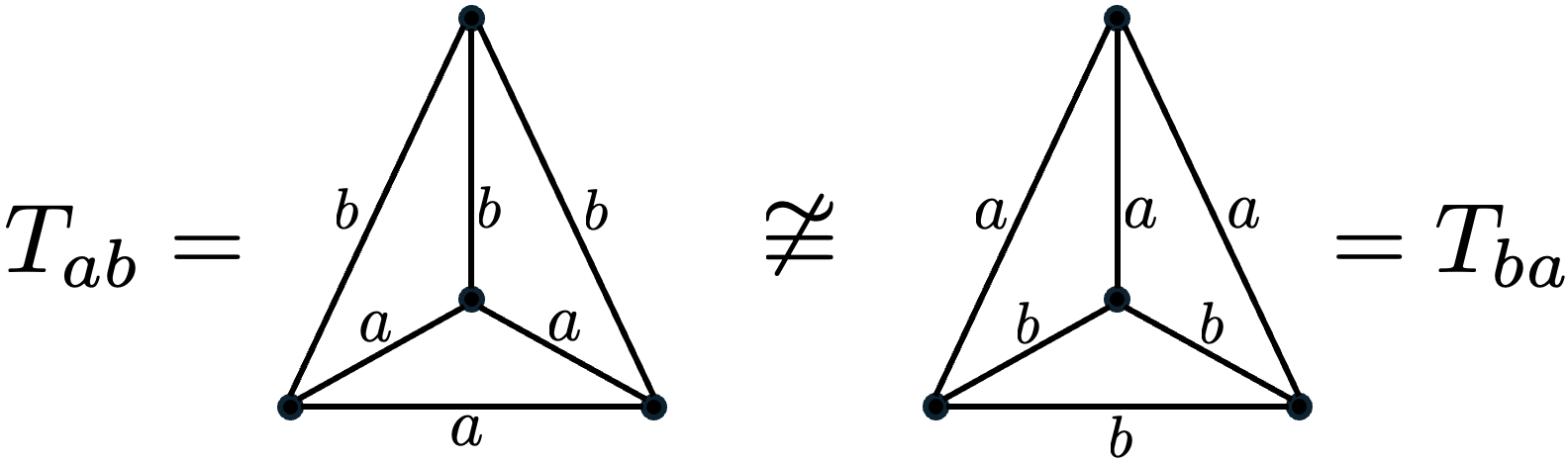}
    \caption{Two tetrahedra with the same multiset of edge lengths $\{a,a,a,b,b,b\}$. If $a \neq b$, they are not isomorphic.}
    \label{fig:two tetrahedra}
\end{figure}
Consider the observable $O$ defined through the polynomial
\begin{equation} 
\begin{aligned}
&O=x_{[1,2]}x_{[1,3]}x_{[1,4]}+x_{[1,2]}x_{[2,3]}x_{[2,4]}\\
&+x_{[1,3]}x_{[2,3]}x_{[3,4]}+x_{[1,4]}x_{[2,4]}x_{[3,4]}\label{tetra obs}  
\end{aligned}
\end{equation}
where $x_{[i,j]}$ are functions returning the weights of edges between nodes labeled $i$ and $j$. Each term is a monomial and the sum of these monomials yields an invariant under changes of labels,  it satisfies \eqref{eq:Observables}.   For $T_{ab}$ and $T_{ba}$, we have
\begin{align}
    O[T_{ab}]=(3a^2b+b^3), ~~O[T_{ba}]=(3ab^2+a^3).
\end{align} 
 For arbitrary $a\neq b$, these two numbers will disagree. Therefore, this observable witnesses that $T_{ab}$ and $T_{ba}$ are two different geometries. 

Anticipating the notation of Section \ref{sec:glocal}, the observable $O$ can be expressed in a graphical calculus as
\begin{equation}
    \vcenter{\hbox{\includegraphics[width=0.8\linewidth]{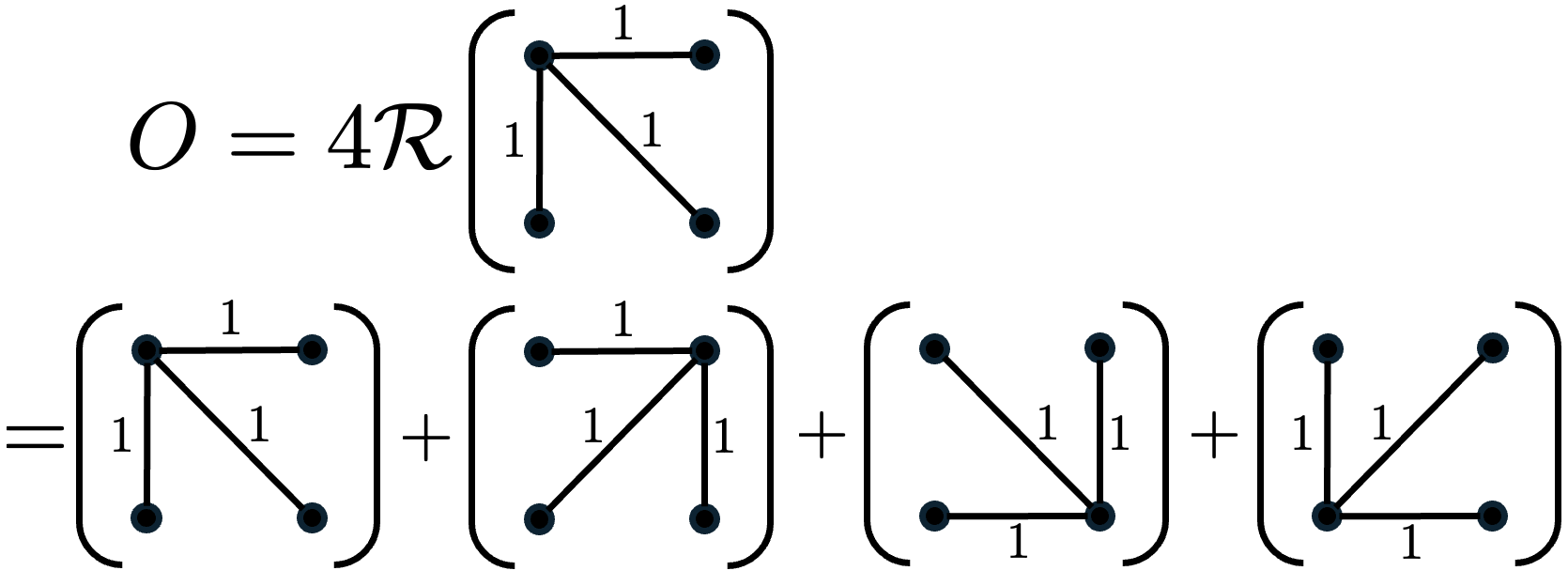}}}.
\end{equation}
The operation $\mathcal{R}$ is a group averaging over changes of labels, defined later on. In words, $\mathcal{R}$ finds the four ways the graph with three edges emanating from a single node can be embedded in a graph with four nodes (in the above example, the complete weighted graphs corresponding to the two tetrahedra).
This is an example of an \textit{invariant graph polynomial}.
 When evaluated on a tetrahedron it encodes label-independent information of the local correlations between the four sets of three adjacent edge lengths.
 
 In a sense, $O$ is both \emph{global}, due to the group averaging $\mathcal{R}$, \emph{and local}, as it probes adjacency information about a type of connected graph substructure. We call observables of this kind \emph{glocal}. 
 In Section~\ref{sec:completeGlocal} we see that label-independent information on any finite weighted graph can be completely captured with a finite set of glocal observables.

\section{Complete Observables discern unlabelled weighted graphs}
\label{sec:unlabelledgraphs}
In this Section, we introduce the main objects of study for this work: complete sets of observables on unlabelled weighted graphs. For brevity, we sometimes call a complete set of observables a \emph{complete observable.}

\subsection{Terminology note}
To simplify the discussion, precise definitions of basic graph theoretic notions and how they relate to the formalism we use are given in Appendix \ref{graph defs}. We only deal with (simple, undirected) \emph{labeled weighted graphs} which we refer to hereafter simply as \emph{weighted graphs}, generally denoted as $G$ or $\Gamma$. The set of node (vertex) labels is fixed to be $V=\{1,...,N\}$ and $E$ is the induced set of edge labels. Any symmetric $N \times N$ matrix with real entries $G_{ij}$ both fully defines and can be seen as the adjacency matrix of a weighted graph.\footnote{Everything we do carries through also for complex weights and complex-valued invariants.} The weights are identified with the entries of the adjacency matrix such that $G_{ij}$ corresponds to the weight of the edge connecting nodes $i$ and $j$, or to the weight of the node $i$, if $i=j$. A vector space of weighted graphs is defined precisely in Section \ref{sec:InvGraphPols}.   The usual notion of a simple unweighted graph is a weighted graph with $G_{ii}=0$ for all $i$, and when $i\neq j$ we have $G_{ij}=1$ if an edge is present and $G_{ij}=0$ otherwise.  Note that any weighted graph $G$ with $N$ nodes can be thought of as a weighting of a complete unweighted graph $K_{N'}$ for $N\leq N'$ with the convention that a weight $0$ on an edge removes the edge. An action of a permutation $\sigma$ on an object $o$ is denoted as $\sigma \cdot o$.

\subsection{Passive and active picture}\label{sec2} 

A relabeling of a weighted graph $G$ is a permutation $\sigma \in \S_N$ acting on the node labels $V$ along with the induced edge label transformation ${\sigma \cdot E \coloneq \big\{\{\sigma(i),\sigma(j)\}~|~ \{i,j\}\in E\big\}}$. This assures that the adjacency relations of the graph are preserved. The induced action on the adjacency matrix is 
\begin{align}
(\sigma \cdot G)_{\sigma(i)\sigma(j)}=G_{ij}.\label{adjacency matrix trafo}
\end{align}
 That is, a change of labels corresponds to permuting the rows and columns of $G_{ij}$
 simultaneously.

A transformation that permutes the node labels can be viewed as either an active or a passive transformation. In the active picture, we imagine keeping the labels fixed in place and move the nodes along with the edges attached to them towards their new label. In the passive picture we imagine keeping the nodes and edges fixed in place, erase the old labels and write new labels, see Figure \ref{fig:act v. pass}. 

\begin{figure}[H]
    \centering
 \includegraphics[width=0.6\linewidth]{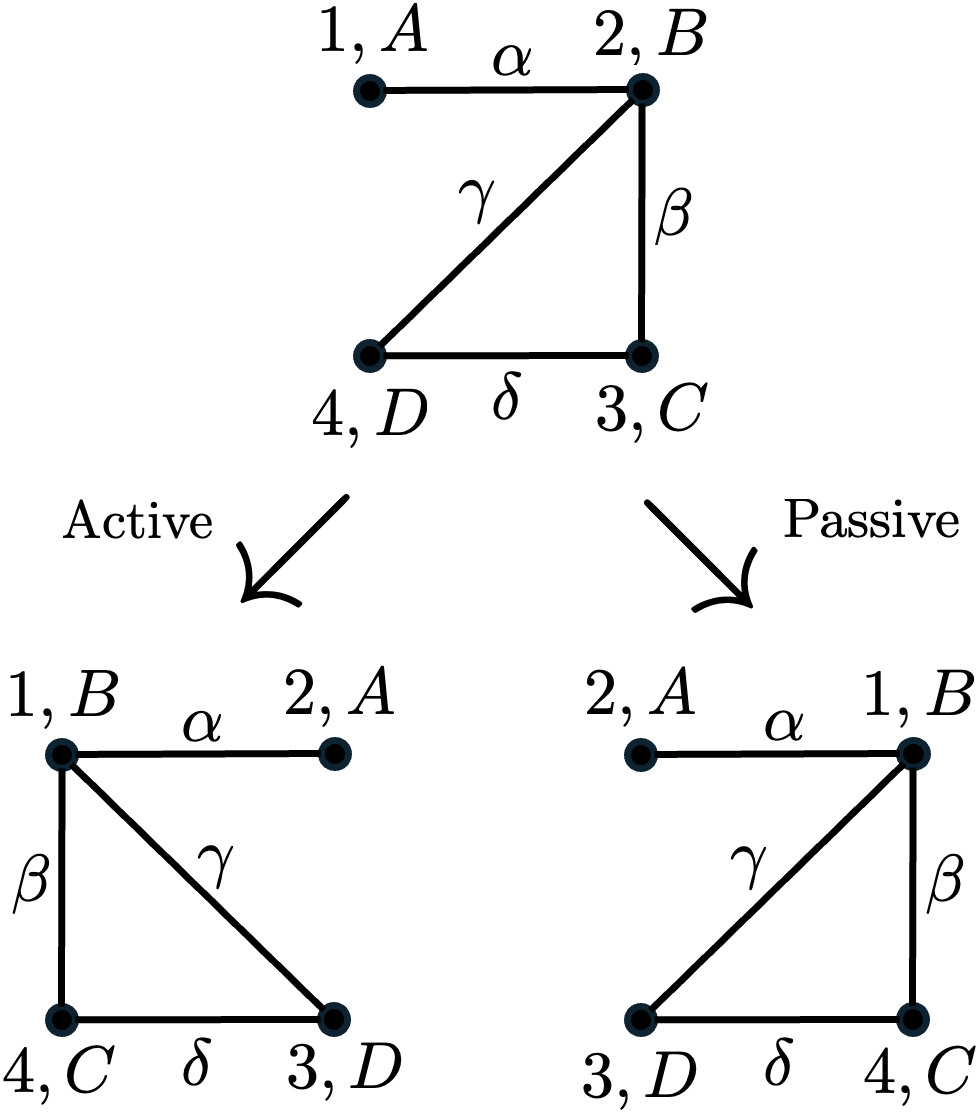}
    \caption{Transformation of a four node graph under the permutation $\sigma=(12)(34)$ in the active and passive picture. $A,B,C,D$ are node weights,  $\alpha,\beta,\gamma,\delta$ are edge weights, and $1,2,3,4$ are node labels. The active and passive picture are two equivalent ways to imagine a relabeling. The adjacency relations between the weights are preserved under a relabeling.}
    \label{fig:act v. pass}
\end{figure}

The active and passive pictures are fully equivalent. The active picture is used in the graphical calculus of Section \ref{sec:glocal}. The passive picture makes intuitively apparent that \emph{the adjacency relations between the weights are preserved under a relabeling}: while labels are changed arbitrarily, the weights `do not move'.

As an example, take the weighted graph in the top row of Figure \ref{fig:act v. pass}. It has a node weighted with $A$ and one weighted with $B$ connected by an edge weighted with $\alpha$. The relabeled graph will also have two nodes weighted with $A$ and $B$ connected by an edge weighted with $\alpha$. After a relabeling, we get an \emph{isomorphic} weighted graph. The same adjacency relations between the weights are encoded in a different node labeling convention.  

Arbitrary changes of labels do not, in general, correspond to symmetries (automorphisms) of a weighted graph. Symmetries are those permutations that leave the adjacency matrix invariant. Invariance of the adjacency matrix is not the requirement we wish to impose here. What is of interest to us is that the \emph{local relations} among the weights are left invariant, which is always true for any permutation of the node labels. This is the physical information, which is independent of the labeling. A set of complete observables fully encodes this local, relational information.

\subsection{Connection to the graph isomorphism problem}
\label{sec:graphIso}
From the point of view of physics, it may seem that the question of interest is to fix an unweighted graph and consider whether two weight assignments on it describe the same relational information. For example, the space of geometries of discrete general relativity and the kinematical space of loop quantum gravity  (discussed in Sections \ref{regge calc} and \ref{sec:spinnetworks} respectively) are typically thought to decompose into sectors on fixed unweighted graphs. 

However, a collection of functions that distinguishes any two weight assignments on any given unweighted graph, would also distinguish any two non-isomorphic unweighted graphs.\footnote{For instance, take two non-isomorphic unweighted graphs, weigh one uniformly with $2$ and the other uniformly with $3$, and take their union (or connect them by adding an unweighted edge, i.e.~weighted with $1$). This yields a weighted graph $G$. Consider also the weighted graph $G'$ with the weight assignments inverted through $2 \leftrightarrow 3$. A function that discerns $G$ and $G'$ also tells apart the two original unweighted non-isomorphic graphs. } Therefore, finding complete observables for discrete geometries defined on graphs is linked to the \emph{graph isomorphism problem} (see e.g.~\cite{thegraphisoproblem}) a notorious set of issues which, roughly, asks what the most efficient way is to decide whether two graphs are isomorphic or not.

In physics, the objects of interest are observables on a `space of weightings', states or configurations, on some arbitrary but considered fixed unweighted graph. But finding these would be equivalent to finding observables on a space of weighted graphs, where the graph connectivity can vary. Therefore, the task is to understand how complete observables are constructed for weighted graphs. In the remainder of this Section we formally define the notion of a complete observable. 

\subsection{Complete observables on weighted graphs}

Consider $\mathcal{G}_N$, the space of all weighted graphs with $N$ nodes. We define \emph{observables} to be functions 
\begin{equation}
O:~\mathcal{G}_N\longrightarrow{}\mathbb{R} \nonumber
\end{equation}
that for all $G_1,G_2\in\mathcal{G}_N$ satisfy 
\begin{equation}
    G_1\cong G_2 \implies O(G_1) = O(G_2)
    \label{def observables}
\end{equation}
A \emph{complete set of observables} or \emph{
complete observable} is a collection of observables $O_1,\dots,O_r$ that for all ${G_1,G_2\in\mathcal{G}_N}$ satisfies
\begin{align}
\label{eq:completeObservablesDefinition}
    G_1\cong G_2 \Leftrightarrow O_k(G_1)=O_k(G_2) \  \forall k=1,\dots,r
\end{align}
Here, $\cong$ signifies graph isomorphism in the sense that there exists a $\sigma\in \S_N$ such that $\sigma\cdot G_1=G_2$, they are equivalent up to a relabeling. 

We define the unlabelled weighted graph $\mathcal{O}_{G}$ as the set of weighted graphs isomorphic to $G\in\mathcal{G}_N$, that is, the \emph{orbit} of $G$ under permutations

\begin{align}
    \mathcal{O}_{G}= \{\sigma\cdot G ~|~\sigma \in \S_N\}.
\end{align}
The orbit is agnostic towards a specific choice of labels. Observables can also be seen as functions on the space of unlabelled graphs, the quotient space $\mathcal{G}_N / \S_N$. A set of complete observables discerns orbits. Taken together as one function, a complete observable is a bijection from $\mathcal{G}_N / \S_N$ to some space, which we here take to be $\mathbb{R}^r$. 

\section{Complete Glocal Observables from Invariant graph polynomials}
\label{sec:InvGraphPols}

In this Section, we construct a complete set of observables for weighted graphs as generating sets of invariant graph polynomials. We stress beforehand two facts of importance that will be shown: a complete observable (i) can always be composed of glocal observables (invariants that encode information about local correlations), and (ii) can be calculated through deterministic algorithms that terminate in finite steps.

\subsection{The vector space of weighted graphs}
\label{ref:vectorSpace}
We now introduce the vector space of weighted graphs and the algebra of polynomials in its dual basis. For $i\leq j\in\{1,..,N\}$, let $e_{[i,i]}$ be the graph with weight $1$ on node $i$ and weight $0$ everywhere else and let $e_{[i,j]}$ be the graph with weight $1$ on the edge between nodes $i$ and $j$ and weight $0$ everywhere else. Here $[\cdot]$ denotes a multiset, a set allowing for repetitions of elements. We build the vector space $\mathcal{G}_{N}$ by allowing symbolic addition and scalar multiplication so that the usual vector space axioms are satisfied. Then the set
$\mathcal{B}=\{e_{[i,j]}~|~ i\leq j \in\{1,...,N\}\}$ is a basis of $\mathcal{G}_N$. Any graph $G$ with weights $g_{[i,j]}$ is written in this basis as
\begin{align}
    G=\sum_{i\leq j\in\{1,...,N\}} g_{[i,j]}e_{[i,j]}. \label{basis exp}
\end{align}
 We fix an isomorphism between the vector space of weighted graphs and the vector space of their adjacency matrices ($N \times N$ symmetric matrices) by identifying $e_{[i,j]}$ with the matrix that has a unit in the $ij$ and $ji$ entry and all other entries zero. By abuse of terminology we call both  spaces $\mathcal{G}_N$.  With this identification the weights $g_{[i,j]}$ are the entries of the adjacency matrix $G_{ij}$ of the weighted graph $G$, i.e.
\begin{equation}
\label{eq:weightsAsAdjacencyMatrix}
g_{[i,j]}=G_{ij}=G_{ji}.
\end{equation}
Finally, note that we have $\text{dim}(\mathcal{G}_N)=\frac{1}{2}N(N+1)$.

Relabelings are implemented by defining how permutations $\sigma\in \S_N$ act on the basis elements
\begin{align}
\label{eq:SnActionDef}
    \sigma \cdot e_{[i,j]}=e_{[\sigma(i),\sigma(j)]}
\end{align}
and extending to all elements of $\mathcal{G}_N$ by linearity. This agrees with the action in Figure \ref{fig:act v. pass} and \eqref{adjacency matrix trafo}. On the vector space of weighted graphs the passive and active picture respectively can be seen as follows
\begin{equation}
\begin{aligned}
    \sigma\cdot G&=\sum_{i\leq j\in\{1,...,N\}} g_{[i,j]} \; e_{[\sigma(i),\sigma(j)]}\\
    &=\sum_{i\leq j\in\{1,...,N\}} g_{[\sigma^{-1}(i),\sigma^{-1}(j)]} \; e_{[i,j]}.\label{passive active in vector terms}
\end{aligned}
\end{equation}
 We again see the equivalence of the two pictures. We either imagine to permute the labels, corresponding to a permutation of the basis elements, or keep the labels fixed and permute the graph and weights accordingly, corresponding to the permutation acting on the weights.    

\subsection{Finite generating sets of invariant polynomials}
\label{sec:FiniteGeneratingSets}

Let $x_{[i,j]}:\mathcal{G}_N\xrightarrow[]{}\mathbb{R}$ denote the elements of the basis dual to $\mathcal{B}$. That is, $x_{[i,j]}[G]=g_{[i,j]}$, where $g_{[i,j]}$ is the coefficient corresponding to the basis element $e_{[i,j]}$ with $G$ defined in \eqref{basis exp}. Using the dual basis,  define $\mathbb{R}[\mathcal{G}_N]=\mathbb{R}[\,x_{[i,j]} \,|\,i\leq j\in \{1,...,N\}\,]$ as the algebra of all polynomials in $x_{[i,j]}$ with coefficients in $\mathbb{R}$. The permutation group $\S_N$ acts on $\mathbb{R}[\mathcal{G}_N]$ by permuting the dual basis $\sigma \cdot x_{[i,j]}=x_{[\sigma^(i),\sigma(j)]}$, which is extended to a full action on the polynomials $p\in \mathbb{R}[\mathcal{G}_N]$ by applying the above action to every appearance of a $x_{[i,j]}$ in $p$.\footnote{This definition is equivalent to $\sigma\cdot p\coloneq p\circ\sigma^{-1}$.} A polynomial $p\in \mathbb{R}[\mathcal{G}_N]$ is called invariant if $\sigma \cdot p = p$ for all $\sigma \in \S_N$. 

The subset of all invariant polynomials forms a subalgebra of $\mathbb{R}[\mathcal{G}_N]$ called the $\textit{invariant algebra}$ $\mathcal{I}^N$. It is a subset of all the observables defined by \eqref{def observables} and lies dense in the set of all permutation invariant continuous functions on graphs \cite{maron2019}. 

Polynomials in $\mathbb{R}[\mathcal{G}_N]$ can be mapped to invariant polynomials via \emph{group averaging}. The map that implements the group averaging is called the \textit{Reynolds operator}  $\mathcal{R}$, which is defined as
\begin{equation}
\begin{aligned}
    \mathcal{R}:~ \mathbb{R}[\mathcal{G}_N]&\longrightarrow \mathcal{I}^N\\
    p&\longmapsto \frac{1}{|\SN|}\sum_{\sigma \in \S_N}\sigma \cdot p.\label{eq:reynolds_def}
\end{aligned}
\end{equation}
 The Reynolds operator is linear, idempotent and satisfies the Reynolds property 
\begin{equation}
\label{eq:reynoldsProperty}
   \mathcal{R}(\mathcal{R}(p)q)= \mathcal{R}(p)\mathcal{R}(q).
\end{equation}
The Reynolds property \eqref{eq:reynoldsProperty} is key to reducing all graph invariants to glocal observables, see Section \ref{sec:reynoldsProperty}. The properties of the Reynolds operator are further detailed in Appendix \ref{sec:propertiesReynolds}.

A central result in invariant theory, Hilbert's finiteness theorem \cite{Hilbert1}, states that invariant algebras such as  $\mathcal{I}^N$ are \emph{finitely generated}. This means there exists a finite subset $I\subset\mathcal{I}^N$ called a $\textit{finite generating set}$, such that any invariant polynomial in $\mathcal{I}^N$ can be expressed as a polynomial combination of elements of $I$.

Remarkably, a finite generating set $I=\{i_1,...,i_r\}$ is also a set of complete observables \cite[Theorem 10]{inv2}. That is, for any two weighted graphs $G_1,G_2$ 
\begin{align}
    G_1\cong G_2 \Leftrightarrow i_k(G_1)=i_k(G_2) \ \forall k=1,\dots, r,
\end{align}
compare with \eqref{eq:completeObservablesDefinition}. For the convenience of the reader, the proof in our terminology is given in Appendix \ref{sec:FiniteGeneratingSetsCompleteObservables}.

A finite generating set $I$ will have polynomials up to some highest degree $d^I$. Defining $\beta(\mathcal{I}^N)$ as the smallest $d^I$ of any $I$, the following (in practice very loose) bounds are known for the action of $\SN$ on a vector space \cite{GARSIA1984107} 
\begin{align}
    \bigg\lfloor \frac{N}{2} \bigg\rfloor \leq \beta(\mathcal{I}^N) \leq \begin{pmatrix}
        D\\
        2
    \end{pmatrix},\label{degree bounds}
\end{align}
where $D$ is the dimension of the vector space, in our case  $D=\text{dim}(\mathcal{G}_N)=\frac{1}{2}N(N+1)$, so that the upper bound goes as $N^4$.

\subsection{Complete sets of glocal observables}
\label{sec:completeGlocal}
Now,  we will construct a complete set of observables (but in general not minimal, see Section \ref{sec:minimalGeneratingSets}) composed of invariant polynomials built out of \emph{connected} subgraphs or the empty graph---a complete set of glocal observables. The case of the empty graph substructure yields the observables that we have previously called purely global (see Section \ref{sec:GlobalToLocal}): it yields invariants that are averages over the node weights and encode no adjacency (edge) information. 

The techniques introduced below have been developed in the context of framing the graph isomorphism problem as the computational complexity of constructing generating sets of graph invariants. We will be following in particular \cite{thiéry2008}. The key for the construction is to use a graph to define elements of $\mathbb{R}[\mathcal{G}_N]$ and use the Reynolds operator to map them to invariant polynomials in $\mathcal{I}^N$.

Let $M\in\mathcal{G}_N$ be a graph with natural numbers as weights, i.e. 
\begin{align}
    M=\sum_{i\leq j\in\{1,...,N\}} m_{[i,j]}e_{[i,j]}, ~ m_{[i,j]}\in\mathbb{N}.
\end{align}
This is called a \emph{multigraph}. The monomial defined as
\begin{align}
    X^{M}\coloneq \prod_{i\leq j\in\{1,...,N\}} (x_{[i,j]})^{m_{[i,j]}},\label{graph monomial}
\end{align}
is called the \textit{graph monomial associated to} $M$. Graph monomials are a subset of $\mathbb{R}[\mathcal{G}_N]$ and in general they are not invariant, $X^M\notin \mathcal{I}^N$.

Applying the Reynolds operator on a graph monomial, we have that
\begin{align}
\label{eq:invariantgraphpolynomial}
    \mathcal{R}(X^M)=\frac1{|\SN|}\sum_{\sigma\in\SN} X^{\sigma\cdot M}=\frac{1}{|\mathcal{O}_M|}\sum_{H\in \mathcal{O}_M}X^H,
\end{align}
see \eqref{eq:reynolds_def}. The \emph{invariant graph polynomial} $\mathcal{R}(X^M)$ associated with $M$ is the sum over all graph monomials $X^H$ for multigraphs $H$ isomorphic to $M$, where $|\mathcal{O}_M|$ is the orbit size, given by $|\mathcal O_M|=|\SN|/|\mathrm{Aut}(M)|,$ where $\mathrm{Aut}(M)$ is the automorphism group of $M$. The invariant polynomial $\mathcal{R}(X^M)$ will be of degree $W(M),$ where $W(M)$ is the sum of the weights of $M$.

Let a \emph{quasi-connected multigraph} or \emph{local feature} be a multigraph that has either exactly one non-trivial connected component, or consists of a single weighted node (see Appendix \ref{graph defs} for full definitions). We introduce the term local feature for a quasi-connected multigraph 
$M$ both for brevity and to emphasize its role in specifying the local correlation encoded by the invariant. Recall that the trivial case when $M$ is a graph with no edges (the quasi in quasi-connected) corresponds to invariants that are purely global observables. The point of view taken here is that purely global observables are the trivial case of a glocal observable.\footnote{An intuitive analogy is $n$-point functions. The trivial case of an invariant constructed out of a multigraph with no edges is analogous to the trivial case $n=1$, single point averages that do not encode correlations.} 

We can now define \emph{glocal observables}:
\begin{definition*}[Glocal observables]\label{theorem}
    A glocal observable $O^{M}$ is a homogeneous invariant graph polynomial $\mathcal{R}(X^M)$, where $M$ is a local feature. 
\end{definition*}

\noindent With these definitions, we can state the following theorem, which characterizes a generating set made entirely of glocal observables:
\begin{theorem*}[Complete set of glocal observables]
    Let
    \begin{equation}\label{eq:theorem}
    I^N_C = \left\{ O^M ~\middle|~ W(M)\leq {\frac{1}{2}N(N+1)\choose2}\right\}.
    \end{equation}
    Then $I^N_C$ generates the full invariant algebra $\mathcal{I}^N$.
\end{theorem*}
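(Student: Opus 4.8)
The plan is to reduce the statement to a purely combinatorial fact about the Reynolds operator, using two standard inputs from invariant theory as scaffolding. First, by Hilbert's finiteness theorem the graded algebra $\mathcal{I}^N$ is finitely generated, and the generators may be taken homogeneous; by the upper bound in \eqref{degree bounds}, the algebra $\mathcal{I}^N$ is generated by its homogeneous invariants of degree at most $B:=\binom{\frac12 N(N+1)}{2}=\binom{D}{2}$, where $D=\dim(\mathcal{G}_N)$. Second, any homogeneous invariant $p$ of degree $d$ satisfies $p=\mathcal{R}(p)$, since $\mathcal{R}$ fixes invariants, and writing $p$ as a combination of graph monomials, $p=\sum_{W(M)=d}c_M X^M$, gives $p=\sum_M c_M\,\mathcal{R}(X^M)$ by linearity of $\mathcal{R}$; cf.\ \eqref{eq:invariantgraphpolynomial}. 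Hence it suffices to show that every invariant graph polynomial $\mathcal{R}(X^M)$ with $W(M)\le B$ lies in the subalgebra generated by $I^N_C$.

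The core of the proof is the following lemma, which I would establish by induction on the number $c(M)$ of quasi-connected components of $M$: for every multigraph $M$, $\mathcal{R}(X^M)$ is a polynomial in the glocal observables $O^{M'}$ with $M'$ quasi-connected and $W(M')\le W(M)$. If $c(M)\le 1$ then $M$ is quasi-connected (or empty) and $\mathcal{R}(X^M)=O^M$ already. If $c(M)\ge 2$, decompose $M=M_1\sqcup M'$ with $M_1$ one quasi-connected component and $c(M')=c(M)-1$; since the supports of $M_1$ and $M'$ occupy disjoint node sets, $X^{M_1}X^{M'}=X^M$. The Reynolds property \eqref{eq:reynoldsProperty} then gives
\[
O^{M_1}\,\mathcal{R}(X^{M'})=\mathcal{R}\!\big(\mathcal{R}(X^{M_1})\,X^{M'}\big)=\frac{1}{|\mathcal{O}_{M_1}|}\sum_{H\in\mathcal{O}_{M_1}}\mathcal{R}\big(X^{H}X^{M'}\big).
\]
I would split the orbit sum according to whether the support of $H$ meets that of $M'$. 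If it does not, then $X^{H}X^{M'}=X^{H\sqcup M'}$ with $H\sqcup M'\cong M$, so the term equals $\mathcal{R}(X^M)$; at least one such $H$ exists, because there is room outside $\mathrm{supp}(M')$ to place a relabelled copy of $M_1$, so these terms contribute a nonzero rational multiple of $\mathcal{R}(X^M)$. If the supports overlap, then $X^{H}X^{M'}=X^{H+M'}$ where $H+M'$ is the multigraph obtained by adding the weights on the overlap; it satisfies $W(H+M')=W(M_1)+W(M')=W(M)$, and it has strictly fewer than $c(M)$ quasi-connected components, since the connected piece $H$ fuses with every component of $M'$ it touches. Applying the induction hypothesis to $M'$ and to each such $H+M'$, and solving the displayed identity for $\mathcal{R}(X^M)$, exhibits $\mathcal{R}(X^M)$ as a polynomial in glocal observables all of weight at most $W(M)$ (note $W(M_1)\le W(M)$ as well).

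Putting the pieces together: a homogeneous invariant of degree at most $B$ is a combination of invariant graph polynomials $\mathcal{R}(X^M)$ with $W(M)\le B$, each of which, by the lemma, is a polynomial in glocal observables $O^{M'}$ with $W(M')\le B$, hence in $I^N_C$; since such homogeneous invariants generate $\mathcal{I}^N$, so does $I^N_C$. As a consistency check, $I^N_C$ is finite --- there are finitely many multigraphs on $N$ nodes with weight-sum at most $B$ --- in agreement with Hilbert's theorem.

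The step I expect to be the real work is the lemma, and within it the choice of induction parameter: the induction must run on the number of quasi-connected components (not on degree or on support size), one must verify that the ``diagonal'' term $\mathcal{R}(X^M)$ genuinely reappears with a nonzero coefficient via the disjoint-placement argument, and one must check that every ``collision'' term $H+M'$ has strictly fewer components while keeping weight exactly $W(M)$, so the induction closes without ever exceeding the bound $B$. Everything else is bookkeeping with the linearity, idempotency, and Reynolds property of $\mathcal{R}$.
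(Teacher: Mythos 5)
Your proposal is correct and follows essentially the same route as the paper: both reduce the theorem to showing that every invariant graph polynomial $\mathcal{R}(X^M)$ with $W(M)$ below the degree bound lies in the subalgebra generated by the quasi-connected $O^{M'}$ with $W(M')\leq W(M)$, and both prove this by induction on the number of connected components using the Reynolds product identity, in which the disjoint superimposition reproduces $\mathcal{R}(X^M)$ while every colliding superimposition strictly reduces the component count. The only cosmetic differences are that the paper multiplies all components at once and routes the bookkeeping through a minimal generating set and a spanning-set lemma for each homogeneous component, whereas you peel off one component at a time and are somewhat more explicit that the diagonal term reappears with a nonzero coefficient.
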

\noindent This theorem is a generalization of Proposition 2.1 in \cite{thiéry2008}, expressed with respect to the notion of glocal observable introduced just before. Here, we have in addition to edge weights also allowed for weights on nodes, in order to allow for more generality and apply the formalism to spin networks (the volume information lives on nodes, see Section \ref{sec:spinnetworks}). The proof of the theorem in our context is given in Appendix \ref{sec:propertiesReynolds}.\footnote{Note that the bound we use is an overestimation. The highest required degree was formally defined as $\beta(\mathcal I_N)\leq{D\choose2}$ for $D = \frac12 N(N+1)$ in \eqref{degree bounds}. The number $\beta(\mathcal I^N)$ is not known a priori, it can only be learned once the algorithm that constructs a complete set of invariants is run to its termination at least once.}

 The above theorem explicitly constructs a finite generating set $I^N_C$ of $\mathcal{I}^N$, consisting of invariant graph polynomials $\mathcal{R}(X^M)$, each built exclusively out of a local feature $M$: a \emph{connected} graph structure $M$.\footnote{As remarked before, the trivial case when $M$ is a graph with no edges is also allowed and yields the purely global observables \eqref{eq:globalNodes}.} This is a complete set of glocal observables $O^M$, indexed by the set of all local features $M$ with sum of edge and node weights less than $D \choose 2$ with $D=\text{dim}(\mathcal{G}_N)$.

Glocal observables display a mix of local and global characteristics so to capture adjacency information in a way that is independent of any change of labels. The global character comes from the group averaging $\mathcal{R}$ and the local character from the local graph substructure encoded through the local feature $M$. 

\section{Glocal character of observables}
\label{sec:glocal}
In this Section we will examine through an intuitive graphical notation how glocal observables capture information on local correlations.  The graphical notation introduced below is inspired by (but does not correspond to) the graphical calculus used in \cite{thiéry2008}. We henceforth leave implicit the multiset brackets, that is, we write $x_{ij}$, $e_{ij}$ and $g_{ij}$ instead of $x_{[i,j]}$, $e_{[i,j]}$ and $g_{[i,j]}$. 
\subsection{Encoding local correlations globally}

 Consider the graph monomial
\begin{equation}
    X^M = x_{11}(x_{12})^2\in\mathbb{R}[\mathcal{G}_3]
\end{equation}
associated to the multigraph $M=e_{{11}}+2e_{12}\in \mathcal{G}_3.$
We write this as
\begin{equation}
    X^M =  \vcenter{\hbox{\includegraphics[width=0.15\linewidth]{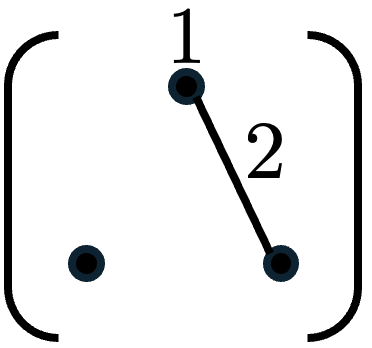}}}.\label{graphicalintro}
\end{equation}
$X^M$ is an operator on $\mathcal{G}_3$, it can be evaluated on arbitrary graphs ${G=\sum_{ij} g_{ij}e_{ij}}\in \mathcal{G}_3$. Graphically, we express this as
\begin{equation}
\begin{aligned}
    \!\!\!\!X^M[G]= \vcenter{\hbox{\includegraphics[width=0.5\linewidth]{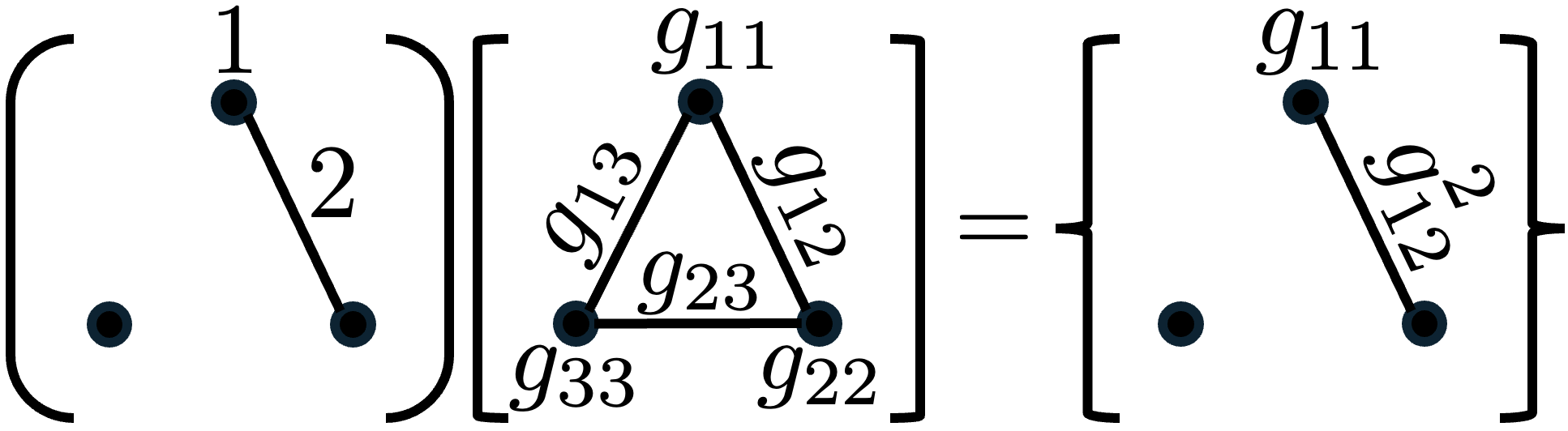}}}= g_{11}g_{12}^2.\label{ex} 
    \end{aligned}
\end{equation}
The round brackets denote the graph monomial, the square brackets indicate the graph on which it is evaluated. The curly brackets denote the multiplication of the pictured weights. 

The corresponding invariant graph polynomial $\mathcal{R}(X^M)$ in standard notation is
\begin{equation}
\begin{aligned}
    6 \,\mathcal{R}(X^M)=&\phantom{+}\, x_{11}x_{12}^2 +x_{22}x_{12}^2 
+x_{33}x_{23}^2\\
&+x_{22}x_{23}^2+x_{11}x_{13}^2+x_{33}x_{13}^2.\label{example invariant poly}
\end{aligned}
\end{equation}
Using the graphical notation given in \eqref{graphicalintro}, it can equivalently be written as
\begin{equation}    
 \label{eq:1-gloc}
 \vcenter{\hbox{\includegraphics[width=0.8\linewidth]{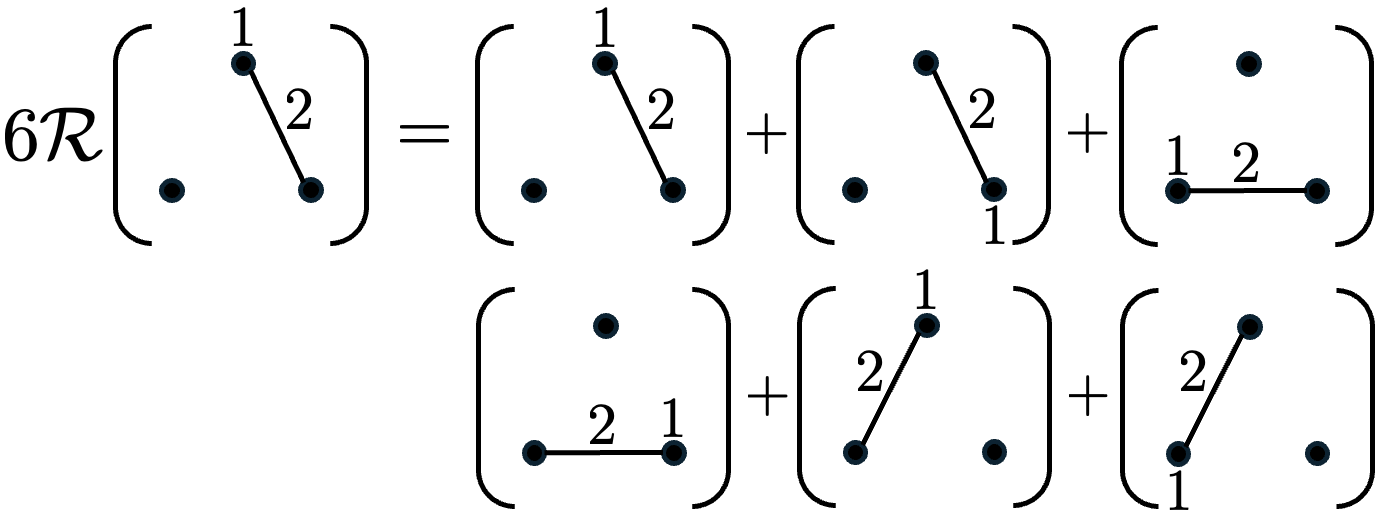}}}.
 \end{equation}
$\mathcal{R}(X^M)$ is obtained by acting with the Reynolds operator $\mathcal{R}$ on the graph monomial $X^M$ associated to the multigraph $M$, see \eqref{eq:invariantgraphpolynomial}. It is given by a sum over graph monomials $X^H$ of all multigraphs $H$ in the orbit of $M$.  
$\mathcal{R}(X^M)$ of \eqref{example invariant poly} is again an operator on the space of all weighted graphs with three nodes $\mathcal{G}_3$. An example of how it acts on graphs in $\mathcal{G}_3$ is:
\begin{equation}
\begin{aligned}
    \vcenter{\hbox{\includegraphics[width=0.7\linewidth]{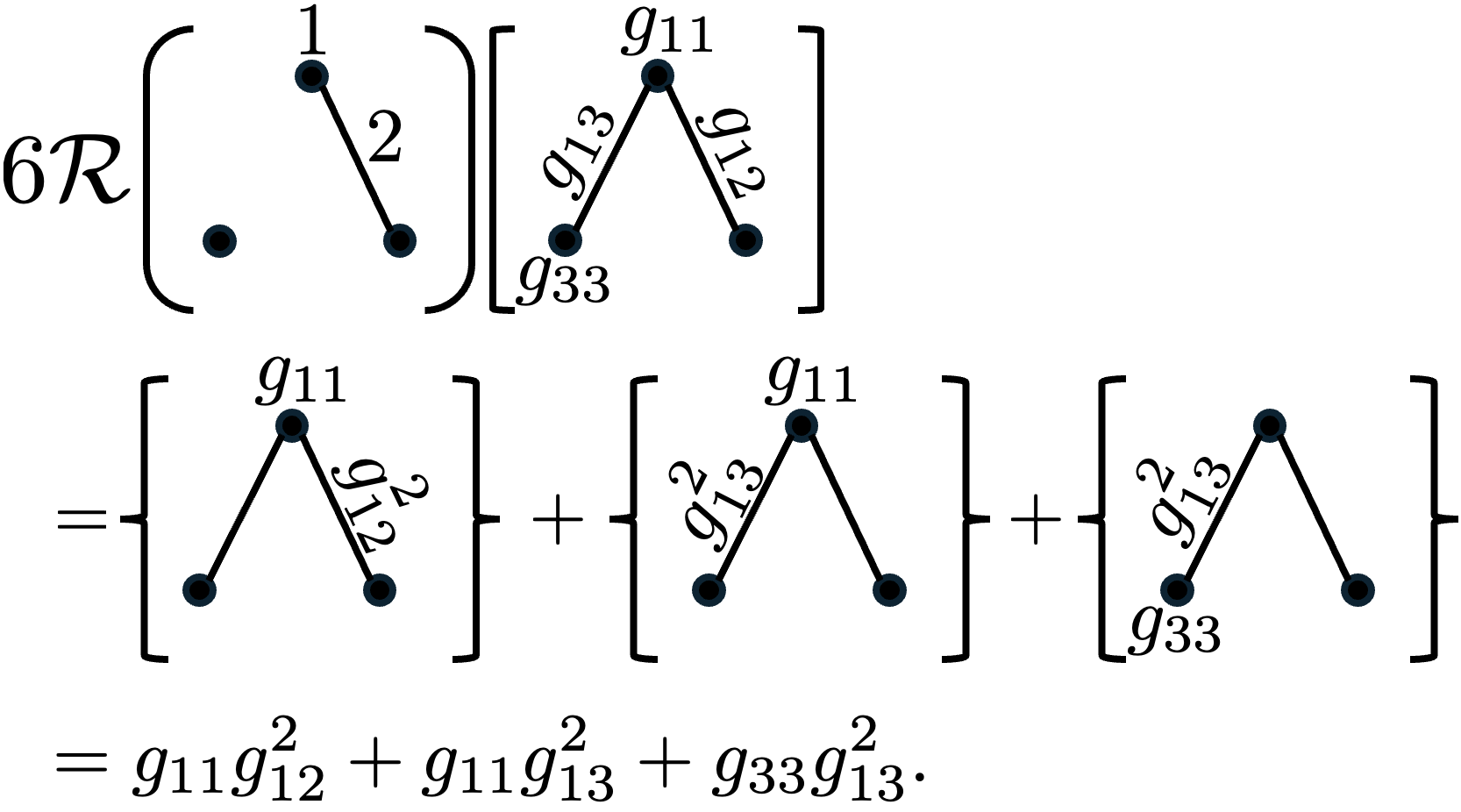}}}\label{application}
\end{aligned}
\end{equation}    
This observable is reading out the weights of ${G=g_{11}e_{11}+g_{12}e_{12}+g_{13}e_{13}+g_{33}e_{33}}$ in every possible way that the structure of $M$ (weighted node with weighted edge attached to it)  can be embedded into $G$. Note that if the adjacency structure specified by the local feature $M$ where absent on $G$, all monomials on the right side of \eqref{example invariant poly} would vanish, giving $\mathcal{R}(X^M)[G]=0$. This is general, $\mathcal{R}(X^M)[G]$ vanishes whenever $G$ does not contain at least one local correlation matching the local feature $M$.

 Any invariant graph polynomial is immediate to extend to an operator on graphs with a larger number of nodes, by adding isolated nodes with no weights in the multigraph. For example, let $H$ be the multigraph obtained from $M$ by adding one such isolated node without weight. Then $\mathcal{R}(X^H)$ will be given by
\begin{equation}
\begin{aligned}
    12\, \mathcal{R}(X^H)=&\phantom{+}\, x_{11}x_{12}^2+x_{11}x_{13}^2+x_{11}x_{14}^2\\
    & +x_{22}x_{23}^2+x_{22}x_{12}^2+x_{22}x_{24}^2\\
    &+ x_{33}x_{13}^2+x_{33}x_{23}^2+x_{33}x_{34}^2\\
&+x_{44}x_{14}^2+x_{44}x_{24}^2+x_{44}x_{34}^2.\label{extended example}
\end{aligned}
\end{equation}
All terms appearing in \eqref{example invariant poly} are still present in \eqref{extended example}. There are additional terms because we have more labels, instead of the smaller $\S_3$, the group average is over the larger $\S_4$. Graphically,
\begin{equation}
    \vcenter{\hbox{\includegraphics[width=0.8\linewidth]{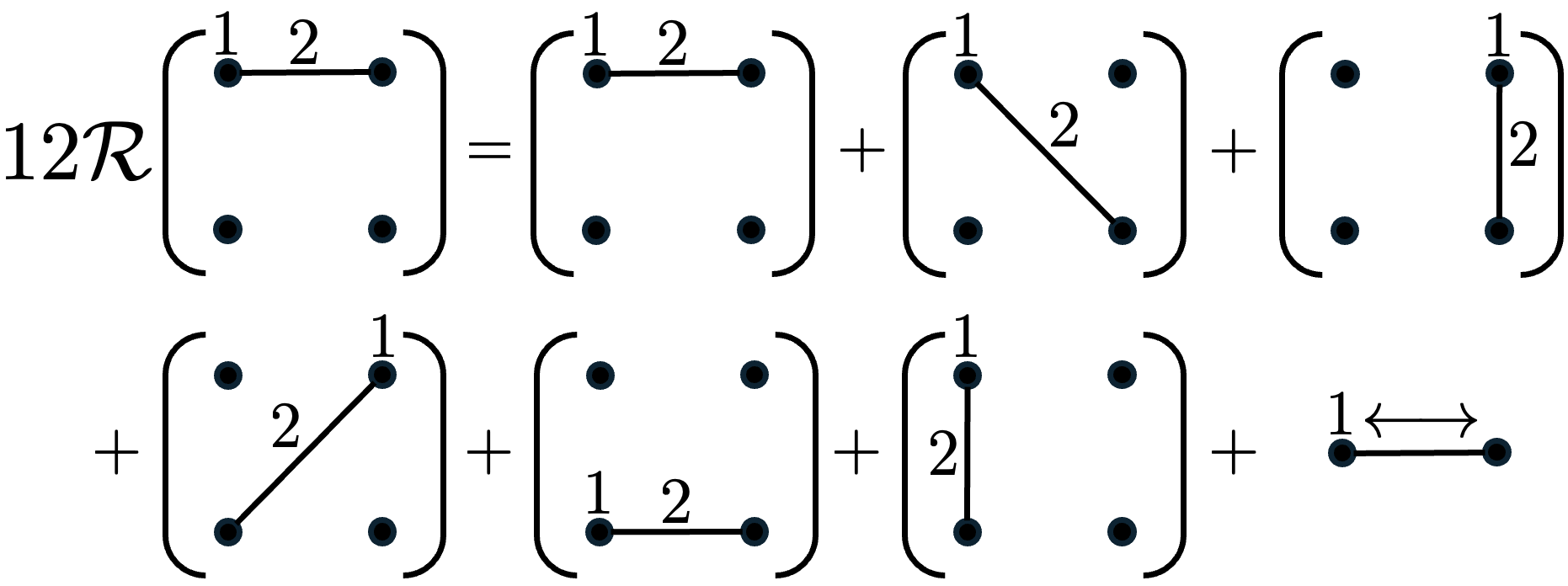}}}
\end{equation}
where we suppressed the six monomials obtained by switching the weighted node to the other end of the edge.

Let us see one more example, a glocal observable where the local feature searched is a triangle, evaluated on a weighted graph of five nodes:
\begin{equation}
\!\!\vcenter{\hbox{\includegraphics[width=0.9\linewidth]{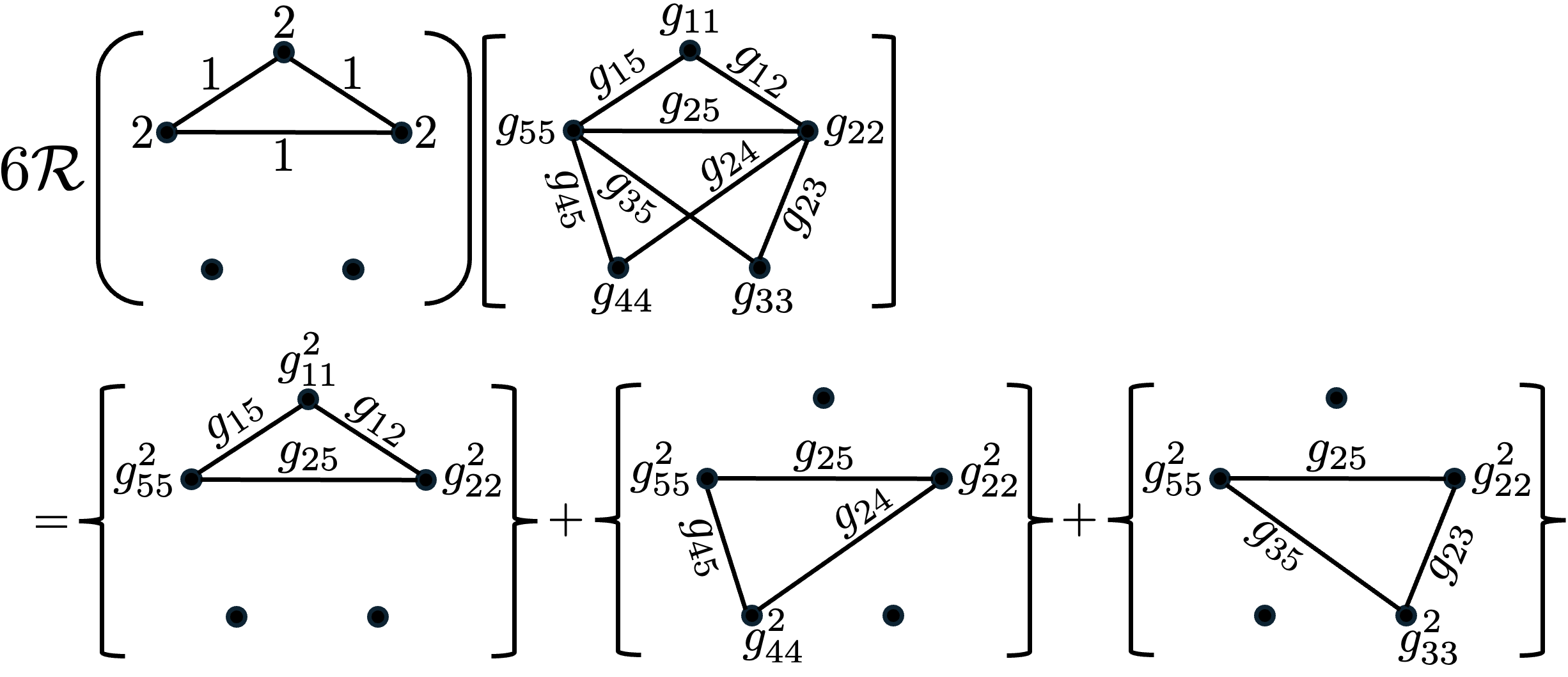}}}\label{eq:triangleObs}
\end{equation}
It returns the sum of weight correlations of the three triangle subgraphs that exist in $G$.

The above examples show that, despite their global character which arises from averaging over permutations of node labels, observables built out of connected subgraphs capture local information. Intuitively, they search everywhere on a larger graph for instances of an adjacency structure among weights, calculate each such local correlation as a product of the adjacent weights, and take the sum. This captures information about local correlations in a way that is invariant under changes of labels. For each type of local correlation, full information will be captured by constructing invariant polynomials of different degree, corresponding to multigraphs with the same underlying graph structure, but different integer weights. Repeating the procedure for multigraphs with different connectivity, the theorem in Section \ref{sec:completeGlocal} guarantees that a finite number of glocal observables captures \emph{complete} information about \emph{all} kinds of local correlations on \emph{any} finite weighted graph. In this sense, we can say that all information on a weighted graph \emph{is} glocal information.

\subsection{Why glocal correlations capture all information}
\label{sec:reynoldsProperty}
The fact that glocal observables suffice to reconstruct all information on a weighted graph is a result of how they compose under multiplication, according to the Reynolds property \eqref{eq:reynoldsProperty}. For any two multigraphs $M_1,M_2$ we have
 \begin{align}
 \label{eq:reynoldsExplicit}
    &\mathcal{R}(X^{M_1})\mathcal{R}(X^{M_2})=\frac{1}{|\mathcal{O}_{M_2}|}\sum_{H\in\mathcal{O}_{M_2}}\mathcal{R}(X^{M_1+H}),
\end{align}
where the sum runs over the multigraphs $H$ isomorphic to $M_2$. This is the sum of invariant graph polynomials obtained from all possible ways to superimpose the two graphs, divided by $|\mathcal{O}_{M_2}|$.
Note that  since $\mathcal{R}(X^{M_1}) \mathcal{R}(X^{M_2}) = \mathcal{R}(X^{M_2}) \mathcal{R}(X^{M_1})$ the roles of $M_1$ and $M_2$ on the right side of \eqref{eq:reynoldsExplicit} can be exchanged.

This remarkable property becomes intuitive in the graphical calculus. Take the following example\footnote{We have suppressed coefficients that appear in the Reynolds product, see Appendix \ref{sec:propertiesReynolds}. They can be written in  a closed but impractical form; see \cite[Section 2]{mikkonen2008algebragraphinvariants}.}  
\begin{equation}
\vcenter{\hbox{
\includegraphics[width=0.7\columnwidth]{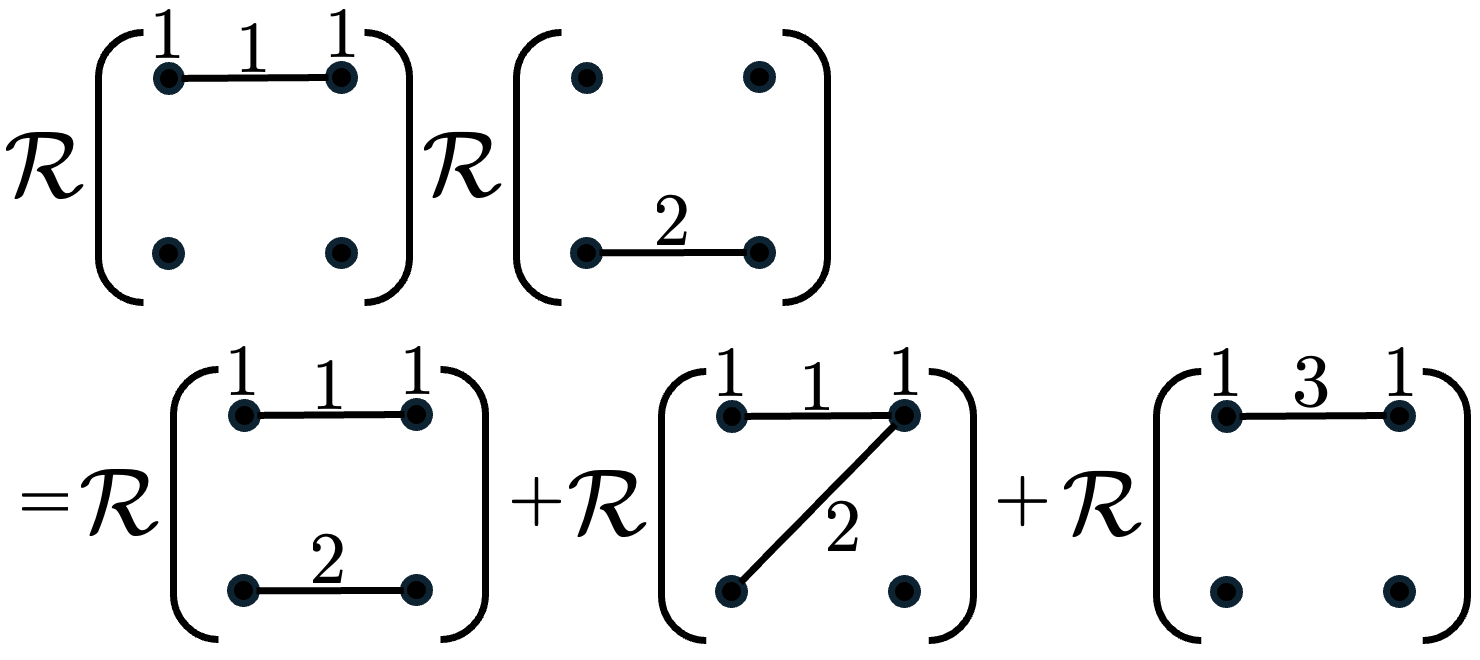}}}.\label{superimposition example}
\end{equation}
On the right hand side, we have one term for every possible non--isomorphic superimposition of the two multigraphs. This behaviour is natural. The product of two invariant graph polynomials must yield again an observable, an invariant polynomial living in $\mathcal{I}^N$. Observables are functions on unlabeled weighted graphs---the orbits, they are agnostic about the labeling. When multiplying two invariant graph polynomials, since the labels have been erased, no notion of `alignment' of the two graphs exists. Then, the new invariant takes into account all relative graph alignments.

Now, note that the first $\mathcal{R}(X^M)$ on the right side of \eqref{superimposition example} is built out of a \emph{non--connected} multigraph, it has two connected components. A moment of reflection shows that when the left side has two connected multigraphs, the right side will have only terms built on connected multigraphs, and \emph{at most one} multigraph with two connected components. This is general: up to isomorphism, there is always \emph{at most one} superimposition of two multigraphs which has the sum of the number of connected components of each. Every other superimposition will have at least one less connected component. Therefore, the Reynolds property allows to construct relations to trade any non glocal observable appearing in a generating set with glocal observables.

\subsection{Most information is not purely global}
\label{sec:minimalGeneratingSets}
 The finite generating set described by the theorem given in Section \ref{theorem} in general will not be minimal. Observables in a minimal generating set cannot be reduced to one another through algebraic operations. 
 
 There exist algorithms that guarantee explicit computation of minimal generating sets in finite steps. The most efficient known at the moment is King's algorithm \cite{king2007}. It is a remarkably simple algorithm for directly computing minimal generating sets of algebras of invariants. Using an implementation of King's algorithm in the computer algebra system SageMath given in \cite{masters}, we explicitly computed a minimal generating set for $\mathcal{I}^3$ in seconds and for $\mathcal{I}^4$ in a few minutes, including weights on nodes. These are given in Appendix \ref{appendix: minimal generating set}. The minimal generating set for $\mathcal{I}^4$ consists of 31 homogeneous invariant polynomials, with the highest degree appearing being 5.\footnote{Apart from the bounds given in \eqref{degree bounds}, it is not known a priori (before its explicit calculation is carried out) how many elements a minimal generating set will have, and what their exact smallest degree bound will be. However, all minimal generating sets will have the same number of elements, and, in particular, the same number of elements of a given degree. This also implies that in any minimal generating set, the highest degree polynomials will always be of degree equal to the smallest degree bound.}  The algorithm appears to scale very fast with the number of nodes $N$, computation of a minimal generating set for $\mathcal{I}^5$ seems to require several weeks for an ordinary computer.
 
The minimal generating set for $\mathcal{I}^3$ includes the six purely global observables capturing the information of the three edge weights and the three node weights, corresponding to  \eqref{eq:powersumNodes} and \eqref{eq:powersumEdges} for $K=1,2,3$. The three other glocal observables in the minimal generating set for $\mathcal{I}^3$ capture the local correlations corresponding to a node weight adjacent to an edge weight. One of them is:
\begin{equation}
    \!\!\vcenter{\hbox{\includegraphics[width=0.7\linewidth]{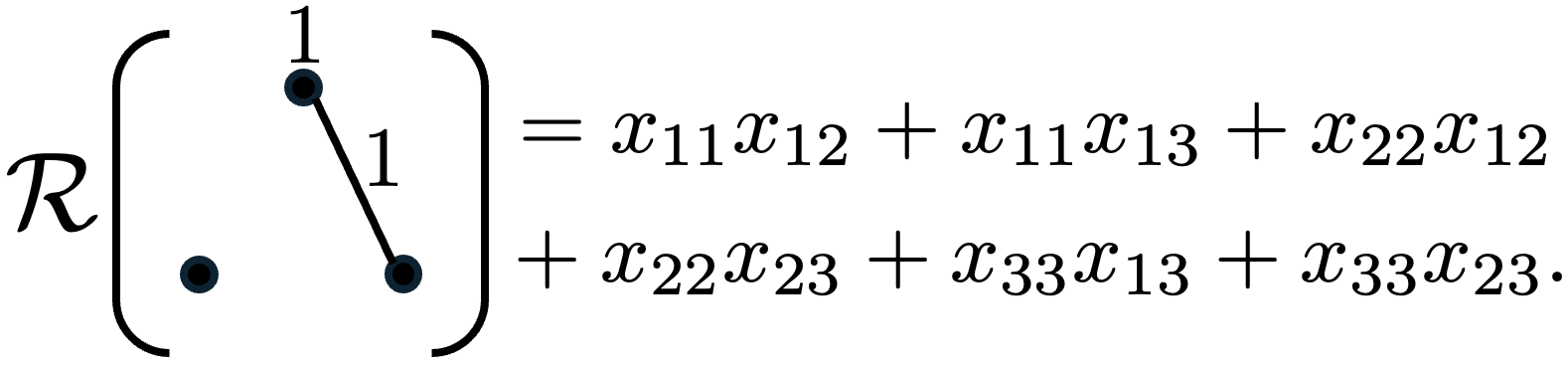}}}
\end{equation}
This glocal observable can not be constructed from purely global observables of the form \eqref{eq:powersumNodes} and \eqref{eq:powersumEdges}. This can be seen explicitly by inspection of the minimal generating set: since this is a degree two polynomial, the only possibility to construct it from purely global quantities is through combinations of the degree one $O^E_1$ and $O^V_1$, which are in the same minimal generating set. In general, because the space of weighted graphs with $N$ nodes is a subspace of that with $N'>N$ nodes (see Section \ref{ref:vectorSpace}), complete observables consisting of only purely global observables do not exist for any $N>1$.

As another example of this behavior, the fact that purely global observables cannot generate the non-trivial glocal observables can be seen in the minimal generating set of $\mathcal{I}^4$ that we have calculated. The glocal observable $P_4 \propto \sum_\sigma x_{\sigma(1)\sigma(2)} x_{\sigma(1)\sigma(3)}$, which captures information on correlations between adjacent pairs of edge weights, according to polynomial degree could only have been generated by the global $Q_2 \propto \sum_\sigma x_{\sigma(1)\sigma(2)} $. Since they both appear in the same \emph{minimal} generating set, this is impossible. 

It is natural to expect that on a large weighted graph, \emph{most} invariants in a complete set of glocal observables will not be purely global. A quick way to be convinced of this, is to recall that on unweighted graphs the purely global observables only give the number of edges and the number of nodes. On a weighted graph, the number of independent purely global observables scales with $N^2$, which corresponds to the number of edge and node weights in the graph. This is a sufficient number of parameters to reconstruct the \emph{unordered} set of weights on the graph, not their adjacency relations. The connectivity of the graph remains to be captured, and this must be done with the non-trivial glocal observables.

The cardinality of a minimal generating set is expected to grow fast with $N$. How fast, is a difficult combinatorial question.\footnote{Explicit lower bounds for minimal generating sets of invariant graph polynomials do not appear to be known. Exponential lower bounds for the cardinality of minimal generating sets have been shown in other areas of invariant theory \cite{kadish2011countinggeneratinginvariantssemisimple, derksen2019exponentiallowerbounddegrees}.} A rough argument for what happens is as follows. The number of possible local kinds of correlations is the number of connected unlabeled subgraphs up to $N$ nodes. This scales asymptotically as $2^{N \choose 2}/N!$, which is super-exponential in $N$. The difficulty is to determine how many of these local features are expected to yield algebraically independent invariants, since, glocal observables can be reduced to others built on smaller connected graphs via the Reynolds relation \eqref{eq:reynoldsProperty}. However, a Reynolds property can only be used for this if it yields a sum of only glocal observables, that is, invariants built out of (quasi-)connected subgraphs (that we have termed local features $M$). For two graphs with  number of nodes in the connected components $N_1$ and $N_2$ such that $N_1 + N_2 \leq N$, the Reynolds product will yield at least one term corresponding to a \emph{disconnected} graph. Therefore, for large $N$, such relations cannot in general be algebraically combined to remove the disconnected graph invariants \cite{mikkonen2008algebragraphinvariants}.

The above imply that for large $N$ a minimal generating set will include at least one glocal observable built for \emph{each} connected subgraph of the complete graph with $N/2$ nodes. Then, the number of glocal observables needed in a minimal generating set is expected to grow super exponentially with $N$.\footnote{This could be an overestimation, also in view of the fact the evaluation of a complete observable on two graphs solves the graph isomorphism problem, for which a pseudo-polynomial algorithm has been found \cite{babai2016graphisomorphismquasipolynomialtime}.  } 

What seems clear is that there will be many more non-trivial glocal observables than purely global observables in a minimal generating set: an arbitrary large weighted graph can encode a vast amount of local information.

\section{Discrete Spacetime}\label{regge calc}
We now apply the techniques developed in the previous sections to discrete classical and quantum spacetime. We will see that: (i) glocal observables directly provide a complete set of observables for Regge calculus, (ii) promoting glocal observables to operators yields complete sets of commuting loop quantum gravity operators.
\subsection{Discrete General Relativity}
\label{sec:Regge}
Regge calculus \cite{Regge:1961px} is a discretization of general relativity which is expected to approximate a spacetime geometry to arbitrary precision \cite{Brewin:2000zh,Gentle:2002ux}.   In this approach, instead of having a manifold with a metric,  the theory is defined on a simplicial complex $\mathcal{C}$ whose skeleton, its edges, are weighted with edge lengths $l_e$. The theory is defined through the Regge action \cite{Dittrich:2021gww}
\begin{equation}
\begin{aligned}
\label{eq:ReggeAction}
   S^\text{R}_\mathcal{C}[l_{e \subset \mathcal{C}}] =& \sum_{t \subset \mathcal{C \setminus \partial C}} A_t(l_{e \subset t}) \epsilon_t(l_{e \subset t}) \\
 &+ \sum_{t \subset \partial \mathcal{C}} A_t(l_{e \subset t}) \psi_t(l_{e \subset t}) \\
       &-  \Lambda \sum_{s \subset \mathcal{C}} V_s(l_{e \subset s}). 
\end{aligned}
\end{equation}
 The first term is the bulk term, the second is the boundary term, and we have included the cosmological term with cosmological constant $\Lambda$. The edge lengths $\l_e$ correspond to a discrete metric. $A_t$ is the area of the triangle $t$ and $V_s$ is the 4-volume of a 4-simplex  $s$. Triangles on the boundary are signified as $t \subset \partial \mathcal{C}$ and triangles in the bulk as $t \subset \mathcal{C \setminus \partial C}$. The discrete curvature is encoded in deficit angles $\epsilon_t$ and $\psi_t$. All non constant quantities in the action are fully determined by the edge lenghts $l_e$.

The details of how the quantities of this action are constructed are not of essence here, we only need to observe that the action \eqref{eq:ReggeAction} is manifestly invariant under changes of labels. This must be the case as the action is a functional of discrete geometries, and geometry does not depend on the choice of labels.  

The equations of motion of Regge calculus are obtained by imposing stationarity of the action under variations of the edge lengths $l_e$ assigned to the skeleton of the simplex $\mathcal{C \setminus \partial C}$ at fixed boundary data---\emph{a weighted graph} $G_\mathcal{C}$ with edge weights the lengths $g_e=l_e$. Since the Regge action is invariant under changes of labels, \emph{it is a functional on unlabeled weighted graphs} $\mathcal{O}_{G_\mathcal{C}}$. These are the orbits of weighted graphs $G_\mathcal{C}$ which are skeletons of piecewise flat simplicial complexes $\mathcal{C}$. Skeletons $\mathcal{C}$ with $N$ nodes form a subset  $\mathcal{G}^\mathcal{C} \subset \mathcal{G}_N$ of the space of weighted graphs with $N$ nodes. Then, every solution of the equations of motion corresponds to a weighted graph $G^\text{R}_\mathcal{C}$. The space of solutions of Regge calculus is a subset $\mathcal{G}^{\mathcal{C},\text{R}} \subset \mathcal{G}^{\mathcal{C}} \subset \mathcal{G}_N$. Then, for any $N$, a \emph{Regge geometry} is an orbit $\mathcal{O}_{G^\text{R}_\mathcal{C}}$ with $G^\text{R}_\mathcal{C} \in \mathcal{G}^{\mathcal{C},\text{R}}$.

It follows that all the information determining a Regge geometry on a simplicial complex with $N$ nodes is encoded in a complete set of glocal observables $I^N_C$ for $\mathcal{G}_N$ per the theorem of Section \ref{sec:completeGlocal}. This is because a complete set of observables on $\mathcal{G}_N$  is also complete on any subset of $\mathcal{G}_N$. In other words, since a complete set of observables discerns all the orbits in $\mathcal{G}_N$, it will also distinguish Regge geometries $\mathcal{O}_{G^\text{R}_\mathcal{C}}$ which are some of these orbits.  

Therefore, finite generating sets of the algebra of invariant graph polynomials provide a solution to the problem of observables for discrete general relativity defined on finite simplicial complexes. In particular, the theorem of Section \ref{sec:completeGlocal} defines finite sets of glocal observables that \emph{completely} describe any finite Regge geometry.

\subsection{Quantum geometry}
\label{sec:spinnetworks}
Let us consider a truncation of the kinematical Hilbert space of loop quantum gravity (LQG) on graphs for up to $N$ nodes. This Hilbert space can be decomposed (or is defined) as a direct sum over graphs $\gamma$,
\begin{equation}
\mathcal{H}_N^\mathrm{LQG} = \bigoplus_{\gamma}^ {|V_\gamma| \leq N} \mathcal{H}_\gamma
\label{eq:sum-gamma}
\end{equation}
where $\gamma$ runs over all unweighted graphs with up to $N$ nodes; see for instance \cite{giesel_2017,sahlmann_2010,ashtekar_lewandowski_2004,perez_2003,thiemann_2003, rovelli_vidotto_2015, ashtekar_bianchi_2021}.  Because every unweighted graph can be seen as some weighting of the complete graph, there is no need to consider a direct sum decomposition into Hilbert spaces built on individual unweighted graphs. As we see in a moment,  \emph{for any given $N$ it suffices to only consider a single Hilbert space built on the complete graph}.

The spin network states $\ket{\gamma;j_l,v_n}$ give a basis for each sector $\mathcal{H}_\gamma$. These are the states that simultaneously diagonalise the areas $\hat A_{l}$ and the oriented volumes $\hat V_n$, for each link (edge) $l$ and node (vertex) of $\gamma$. The $j_l\in\NN/2$ correspond to representations of $\mathrm{SU}(2)$ and give the area $A_l = 8\pi\gamma\ell_\mathrm{P}^2\sqrt{j_l(j_l+1)}$. The $v_n$ are eigenvalues of the oriented volume operator, indexing a basis of the intertwiner space of the $\mathrm{SU}(2)$ representations flowing into the node. The spin network states $\ket{\gamma;j_l,v_n}$ describe quantum geometries roughly in the sense that while their areas and volume are sharp, other geometrical quantities (for instance, dihedral angles) have a finite quantum uncertainty.

Now, recall that a weighted graph can be thought of as a weighting of the complete graph over $N$ nodes, where a weight zero on an edge is understood as removing the edge, see \eqref{basis exp}. More precisely, any weighted graph with up to $N$ nodes is an element of the vector space of weighted graphs $\mathcal{G}_N$, a basis of which are the graphs which correspond to each of the nodes and each of the edges of the complete graph. This construction fits nicely into the structure of spin networks, with the zero in the spectrum of the area operator understood as removing an edge. 

Then, instead of using a direct sum over graphs as in \eqref{eq:sum-gamma},  the Hilbert space $\H^{\mathrm{LQG}}_N$ can be formulated directly in terms of the complete graph with $N$ nodes. In fact, a spin-network $\ket{\gamma;j_l,v_n}$ on a graph $\gamma$ is orthogonal to a spin-network on a larger graph $\gamma'$, unless the difference is the presence of nodes and edges with $0$ weight, which corresponds to no added areas or volumes. We can instead consider spin networks on the complete graph $K_N$
\begin{equation}
    \H^\mathrm{kin}_N = \H_{\mathrm K_N}.
\end{equation}
 The weighted graphs corresponding to spin networks are a subset $\mathcal{G}^\text{qg}_N \subset \mathcal{G}_N$ of the space of weighted graphs. Then, a spin network state in $\H^\mathrm{kin}_N$ corresponds to a graph with adjacency matrix
\begin{equation}
\Gamma_{nn}=v_{n},~~~~~~~~~~ \Gamma_{nm}=A_{l(nm)},
\end{equation}
where $l(nm)$ is the link\footnote{Traditionally, one assigns a spin label $j_l$ to each edge, but the relation between $A_l$ and $j_l$ is invertible, and using the areas directly will simplify the construction.} between nodes $n$ and $m$. Thus, we denote a spin network state simply as $\ket{\Gamma}$. The standard inner product on $\mathcal{H}_N^{\text{kin}}$ is given by
\begin{equation}
\label{eq:innerProduct}
 \braket{\Gamma | \tilde{\Gamma}} =
\begin{cases}
1, & \text{if } \Gamma = \tilde{\Gamma}  \\
0, & \text{if } \Gamma \neq \tilde{\Gamma}
\end{cases}
\end{equation}
for any  $\Gamma,\tilde{\Gamma} \in \mathcal{G}^\text{qg}_N$.

The graph monomials $X^M$ can be naturally promoted to self-adjoint operators $\hat X^M$ on $\H^\kin_N$ by defining their action on the spin-network basis as
\begin{equation}
    \hat X^M\ket{\Gamma}=X^M[\Gamma]\ket{\Gamma},
\end{equation}
which, by linearity, extends to a generic state ${\ket\psi=\sum_{\Gamma}\psi(\Gamma)\ket\Gamma,}$
as
\begin{equation}
    \hat X^M \ket{\psi} = \sum_{\Gamma\in\mathcal{G}_N^{\text{qg}}} \psi(\Gamma)X^M[\Gamma]\ket\Gamma.
\end{equation}
Every graph monomial operator $\hat X^M$ can be written in terms of the area $\hat A_{l}$ and volume $\hat V_i$ operators as
\begin{equation}
    \hat X^M = \prod_{i\leq j=1}^N (\hat V_{i})^{m_{[i,i]}}\cdot (\hat A_{l(ij)})^{m_{[i,j]}},
\end{equation}
where $m_{[i,j]}\in\NN$ are the weights of the multigraph $M$. There are no operator-ordering ambiguities, because all the area and volume operators commute. By definition, all the graph-monomial operators are simultaneously diagonalisable.

Now, the spin--network states $\ket{\gamma;A_l,v_n}=\ket{\Gamma}$ are clearly not invariant under changes of node labels, as they are defined via a \emph{labeled} weighted graph. The permutation group $\SN$ acts unitarily\footnote{Note this definition follows from the fact that the action~\eqref{eq:SnActionDef} on the labeled graphs is a right action.} on $\H^\kin_N$ as
\begin{equation}
    U_\sigma\ket{\Gamma} = \ket{\sigma^{-1}\cdot\Gamma}.
\end{equation}
We have argued that the invariance under permutations of node labels is the natural implementation of background independence in a discrete setting. Therefore, the relabeling invariant subspace
\begin{equation}
\H_N^{\text{sym}}=\Inv_{\SN}\H_N^\kin
\end{equation}
is a more natural candidate on which to build the kinematical sector of loop quantum gravity.\footnote{Some of the authors of the present work are studying the permutation invariant quantum polyhedron~\cite{polyhedra,dibiagio2025permutation}. Permutation invariance has nontrivial effects, reducing the number of available geometries and modifying the spectrum of the volume operator---the quantum geometry exclusion principle---while maintaining a well--behaved semiclassical limit.}

A physical observable should commute with the action of $\SN$. Note that the area and volume operators of a link or a node are not $\SN$ invariant, as they are defined in terms of the labeling. For example,
\begin{equation}
    U^\dagger_\sigma \hat V_i U_\sigma=\hat V_{\sigma(i)}.
\end{equation}
However, it is immediate to promote functions of these operators to self-adjoint operators on $\H^\kin_N$ via the invariant graph polynomials $O^M$, through the definition
\begin{equation}
    \hat O^M\ket{\Gamma}=O^M[\Gamma]\ket{\Gamma}.
\end{equation}
These self--adjoint operators commute with the action of the permutation group,
\begin{equation}
    [\hat O^M,U_\sigma]=0,
\end{equation}
which can be seen either by invoking the relabeling invariance of the graph polynomials,
\begin{equation}
\begin{aligned}
\hat O^M U_\sigma \ket{\Gamma}
&=O^M[\sigma^{-1}\cdot \Gamma] \ket{\sigma^{-1}\cdot \Gamma} \\
&=O^M[\Gamma] \ket{\sigma^{-1}\cdot \Gamma}\\
&=O^M[\Gamma]U_\sigma \ket{ \Gamma}\\
&=U_\sigma \hat O^M[\Gamma]\ket{\Gamma},
\end{aligned}
\end{equation}
or by noting that
\begin{equation}
    \hat O^M = \frac1{N!}\sum_{\sigma\in\SN} \hat X^{\sigma \cdot M} = \frac{1}{N!}\sum_{\sigma\in\SN}U_\sigma^\dagger \hat X^M U_\sigma.
\end{equation}

An orthonormal basis for the relabeling--invariant subspace $\H^\mathrm{sym}_N$ is defined as follows. The projector
\begin{equation}
    P^\text{sym} = \frac{1}{N!}\sum_{\sigma \in \SN} U_\sigma. 
\end{equation}
sends each spin network state $\ket{\Gamma} \in \H_{N}^\text{kin}$ to the relabeling invariant state
\begin{equation}
    P^\text{sym} \ket{\Gamma}  = \frac{1}{N!} \sum_{\sigma \in \SN} \ket{\sigma \cdot \Gamma}
     = \frac{1}{|\mathcal{O}_{\Gamma}|}\sum_{H\in \mathcal{O}_\Gamma} \ket{H}.
\end{equation}
This is the uniform superposition of all the spin networks states corresponding to graphs in the orbit of $\Gamma$. Defining the normalised states
\begin{equation}
     \ket{\mathcal{O}_\Gamma} \coloneq \frac{P^\text{sym} \ket{\Gamma}}{\matrixel{\Gamma}{(P^\text{sym})^\dagger \,  P^\text{sym}}{\Gamma}}  \in \H_N^\mathrm{sym},
\end{equation}
we have
\begin{equation}
    \ket{\mathcal{O}_\Gamma} = \frac1{\sqrt{|\mathcal O_\Gamma|}}\sum_{H\in\mathcal O_\Gamma}\ket{H},
\end{equation}
with $|\mathcal O_\Gamma|=|\SN|/|\mathrm{Aut}(\Gamma)|.$
This basis of states satisfies the following orthonormal relations
\begin{equation}
\label{eq:innerProductSym}
 \braket{ \mathcal{O}_{\Gamma} | \mathcal{O}_{\tilde{\Gamma}} } =
\begin{cases}
1, & \text{if } \Gamma \cong \tilde{\Gamma}  \\
0, & \text{if } \Gamma \ncong \tilde{\Gamma}
\end{cases}
\end{equation}
where now the cases are decided based on \emph{isomorphism} between weighted graphs. Compare to \eqref{eq:innerProduct} where the cases are decided based on equality as adjacency matrices.

We can call the states $\ket{\mathcal O_\Gamma}$ \textit{unlabelled spin network states.} They form an orthonormal basis for $\H^\mathrm{sym}_N$. Therefore, a generic observable can be defined in terms of its matrix elements as
\begin{equation}
\label{obsQ}
\hat O = \sum_{\mathcal O_\Gamma,\mathcal O_{\Gamma'}}f(O_\Gamma,\mathcal O_{\Gamma'})\ket{\mathcal{O}_\Gamma}\!\!\bra{\mathcal{O}_{\Gamma'}}
\end{equation} with $f(O_\Gamma,\mathcal O_{\Gamma'})=f(O_\Gamma',\mathcal O_{\Gamma})^*$.
When $\Gamma \ncong \Gamma' $, these observables are graph changing. Therefore, they allow to study transitions between unlabelled spin networks on graphs with different connectivity. 

Finally, per the theorem in Section \ref{theorem}, the set of invariants $\hat O^M$, with $W(M)$ high enough will form a complete set of commuting observables for $\H^\mathrm{sym}_N$. Their simultaneous eigenvalues uniquely identify a state $\ket{\mathcal O_\Gamma}$ in the orthonormal basis of unlabelled spin network states. Further, it is not hard to show that the set of quantum observables of the form \eqref{obsQ} provides a tomographically complete set of observables, because all the invariant operators that do not commute with $\hat O^M$ are of that form, for $\Gamma \ncong \Gamma' $. This fact will be detailed elsewhere \cite{EmilMitrakos}. 

Given that geometry cannot depend on changes of labels, and the same should apply for quantum geometry, the structure described above strongly suggests a reformulation of the kinematics of loop quantum gravity based on $\H^\text{sym}_N$. In particular, working with a Hilbert space that is the span of unlabelled spin networks is a discrete and quantum analogue of imposing the spatial diffeomorphism constraint of general relativity.

\section{Comments on the continuum}
We have seen that glocal observables provide a constructible and physically meaningful solution to the problem of observables for weighted graphs with a finite number of nodes,  demonstrating how local information is fully encoded in appropriately chosen sets of global invariants. Now, we give a heuristic discussion on the possible relevance of our findings to continuous geometries.

\subsection{Discretised diffeomorphisms?}\label{all conventions vs. some conventions}

At their most basic, coordinates are an assignment of labels to points of the mathematical substrate on which we define a physical theory. Starting from this bare bone structure, one then introduces additional mathematical and physical structures. This is the philosophy of a background independent theory, where no notion of spacetime exists beforehand. 

\emph{Not all} changes of labels will be allowed on a graph if we think of them as anything more than arbitrary enumerations of mathematical points. At first thought, restricting the allowed label conventions might seem to make things \emph{easier} since the symmetry group would be \emph{smaller}. But, the action of our `group of allowed changes of labels' will be now more \emph{complicated} due to the additional rules it needs to satisfy. Then, this is obscuring a \emph{simpler} picture.

Discrete versions of the diffeomorphism action will be ambiguous \cite{bahr_gambini_pullin_2012, Bahr_2009,bahr_dittrich_2009, freidel_louapre_2003}, they approximate inherently continuum notions on a discrete structure. If we do restrict to some `allowed labelings' that are in some sense a discrete version of diffeomorphisms, these will necessarily be \emph{some} of the permutations. Therefore, the invariants we have studied here would \emph{also} be observables under \emph{any} discretised version of diffeomorphisms.

What are we then to make of `extra' objects invariant under some subset of changes of labels, but not invariant under  all changes of labels? These are things that will depend on whether we have called a node of a graph `42' or `6', how can these be `true things'? The meaningful observables of a discrete theory are those that are invariant under all changes of labels. 

In fact, it is natural to \emph{define} the geometrical information on an arbitrary weighted graph as the information held by the invariants that fully specify it up to isomorphism. In the continuum, we must restrict the allowed choices of labels in order to preserve differentiability, so we can use differential geometry. In the discrete, we are in a position to push the central insight of general relativity to the fullest, and allow for all labeling conventions.

\subsection{Complete graphs replace manifolds} 

In general relativity, there is not one group of diffeomorphisms but a group of diffeomorphisms $\phi_\mathcal{M}$ for each differentiable manifold $\mathcal{M}$, the autormorphisms which preserve its differential structure. Here, we have always considered the action of $\SN$ on weighted graphs, but a generic weighted graph \emph{does not} have $\SN$ as its automorphism group. $\SN$ is the automorphism group of a uniformly weighted complete graph. Recall that any weighted graph is given by
\begin{equation}
    G=\sum_{i\leq j\in V} g_{[i,j]}e_{[i,j]} 
\end{equation}
with $V=\{1,...,N\}$.
Then, any weighted graph can be understood as a weighting of the (unweighted) complete graph. Recalling that the action of $\SN$ in the passive picture is
\begin{equation}
\sigma\cdot G =\sum_{i\leq j\in\{1,...,N\}} g_{[i,j]} \; e_{[\sigma(i),\sigma(j)]}    
\end{equation}
it becomes clear that demanding relabeling invariance is simply the statement that it makes no difference how we choose to label the nodes of a complete graph. All points are topologically `the same', they  could all in principle be taken connected to each other, and this is to be determined by the weighting.
  
In this sense, the discrete allows us to implement background independence in a more radical way than the continuum case. In general relativity, we formally consider one of many manifolds and the diffeomorphisms that are the automorphisms of that manifold. In the discrete, we can fix as an ambient canvas a complete unweighted graph once and for all, and consider invariants under its automorphism group.\footnote{Recall that a weight of zero can be understood as removing an edge. All unweighted graphs can be understood as accordingly weighting the unweighted complete graph with ones and zeros on its edges.} This analogy in the context of continuum and discrete general relativity is given in Table \ref{tab:Analogies}.

\begin{table}[H]
\centering
\begin{tabular}{l|l}
\textbf{Continuum} & \textbf{Discrete} \\  \hline \\
Some manifold $\mathcal{M}$          & Complete graph $K$         \\
Diffeomorphisms $\phi_\mathcal{M}$          & Permutations $\sigma$  \\
Coordinates $x^{\mu}(p)$         & Labeling $l(i)$           \\
Metric components $g_{\mu\nu}(x)$                      & Weights $G_{ij}$      \\
Spacetime $(\mathcal{M},O_g)$            & Discrete spacetime $(K ,O_G)$       
\end{tabular}
\caption{Analogies between continuum and discrete general relativity notions. Diffeomorphisms are \emph{automorphisms} of a differential manifold $\mathcal{M}$ (permutations of the nodes are \emph{automorphisms} of the complete graph $K$). They act on the metric (weighted graph $G$). The metric components $g_{\mu\nu}$ in some coordinates $x^\mu(p)$ extremize the Einstein-Hilbert action (weights $G_{ij}$ in some choice of node labeling $l(i)$ extremize the Regge action). The manifold points $p$ are analogous to nodes $i$. A (discrete) spacetime geometry is the orbit $O_g$ of the metric (orbit $O_G$ of weighted graph) under diffeomorphisms of the manifold $\mathcal{M}$ (permutations of the nodes of $K$). }
\label{tab:Analogies}
\end{table}

\subsection{Continuum glocal observables}
It is not difficult to concoct continuous functionals on the space of solutions of general relativity that mimick the glocal observables we have studied. They will only be well defined on sectors of metrics where they do not diverge. Their definitions will be implicit since the `edges' will now be curves that can be defined invariantly, for instance as geodesics. 

The simplest case is the analogues of the trivial glocal observables, those that we have termed purely global (see Section \ref{sec:emptyGraph}). Take any scalar $F$ on a manifold $\mathcal{M}$ and consider the following integral
\begin{equation}
\label{eq:globalObs}
    O[g] = \int_\mathcal{M}d\omega_x  \; F(x),
\end{equation}
where $x$ is shorthand for coordinates $x^\mu$ and ${d\omega_x = dx^4 \sqrt{-g(x)}}$ is the volume form in $x^\mu$. On the collection of spacetimes where it is well defined, this object is invariant under diffeomorphisms. Observables of this form, for example, are the total volume ($F=1$) and the mean curvature ($F=R$, the Ricci scalar). On non--compact manifolds, \eqref{eq:globalObs} will diverge on most of the space of solutions. For instance, the volume is infinite for all non--compact spacetimes. If we restrict to bounded solutions on compact manifolds, or to bounded solutions and asymptotically flat spacetimes, such integrals can be well defined observables.

Functionals analogous to the glocal observables \eqref{eq:1-gloc} and \eqref{extended example} are  of the form
\begin{equation}
\label{eq:1-glocContinuum}
  O[g] = \iint_\mathcal{M} d\omega_x d\omega_y  F(x)\mathcal{F}[\gamma_{xy}]
\end{equation}
where $\gamma_{xy}$ is a geodesic connecting points $x$ and $y$, which will be uniquely defined in spacetimes without caustics, and $\mathcal{F}[\gamma_{xy}] = \int_{0}^{1}d\lambda f(\lambda)$ is evaluated along $\gamma_{xy}$ parametrized with $\lambda$. For instance, when $f = \sqrt{g_{\mu\nu}\dot{x}^\mu\dot{x}^\nu}$, $\mathcal{F}[\gamma_{xy}]$ is the proper length or proper time along the geodesic, and zero for null geodesics. A functional analogous to \eqref{eq:triangleObs}, which seeks triangular correlations, is
\begin{equation}
\label{eq:1-glocContinuum}
  O[g] \!=\!\!\! \int\!\!\!\!\int\!\!\!\!\int_\mathcal{M} \!\!\!\!d\omega_xd\omega_yd\omega_z  F(x) F(y) F(z) \mathcal{F}[\gamma_{xy}]  \mathcal{F}[\gamma_{yz}]  \mathcal{F}[\gamma_{zx}] \nonumber
\end{equation}
and so on. Vast collections of observables can be constructed in this manner. It is natural to expect---although we do not attempt to establish this here---that the geometric information encoded in observables such as the above, can be approximated arbitrarily well by invariant graph polynomials evaluated on simplicial approximations of the continuum spacetime.

It takes a moment of thought to be convinced that whenever we think of a question that does not refer to labels or coordinates, an invariant can be written that corresponds to the answer. Similarly to how we first learns about functions through polynomials, the general morals of invariant graph polynomials allow to understand more general graph invariants. For example, a counter, even in the discrete, cannot be expressed analytically. The number of edges in a weighted graph is information distributed in many invariant graph polynomials. A non-analytic function is needed to capture it in a concise formula, the characteristic function of zero $\chi_0$. Then $\frac 12N(N-1)-\sum_{i < j}\chi_0(g_{ij})$ counts the edges. To count the number of adjacent edges we write $\frac 12N(N-1)(N-2)-\sum\chi_0(g_{ij})\chi_0(g_{ik})$, and so on. An analogous example for the continuum is the two parameter family of observables
\begin{equation}
\label{eq:geodCorr}\nonumber
  O(L,R) = \int\!\!\!\!\int_\mathcal{M} \!\!\!d\omega_xd\omega_y \delta\!\left(R(x), R \right) \delta\!\left(R(y), R \right) \delta\!\left(L_{\gamma_{xy}}, L \right),
\end{equation}
where $L_{\gamma_{xy}}$ is the proper length if $x$ and $y$ are spacelike separated and zero otherwise. This counts the times two points with curvature $R$ are found at distance $L$. Similar `geodesic correlators' are used in the context of non-perturbative quantisations of gravity in \cite{Ambjorn:2012jv,Modanese:1994gh}. 

These are some examples of continuum invariants out of a plethora that can be concocted. The moral is that to write glocal invariants, we built a function encoding a kind of topological connectivity, and take the sum over the ways this structure can be embedded in the ambient canvas, a manifold for the continuum, the complete graph for the discrete. Evaluated on a configuration on which they are well defined, the glocal observables encode invariant-information for the kind of local correlation structures they correspond to.

\subsection{Incompleteness: the infinite}
The `problem of observables' of general relativity refers to at least three related but different questions: How is local information encoded in global functions? Can a complete set of observables be constructed? Can observables be found at all?  

In short, the answers are as follows. In the discrete and finite, all local correlations are completely captured with a finite set of glocal observables. Mathematically \emph{constructible} complete sets of observables cannot be found in the countable discrete case and the non--compact continuum case.\footnote{Whether complete sets of observables could be found for the compact continuum case seems to be an open question.} These anti--classification obstructions reflect \emph{limitations of the mathematical language} when dealing with the infinite, they do not reflect the physical content of the theory. Vast families of observables can be written down for the infinite case, and completeness might be possible to recover, if we appropriately restrict the state space. Let us briefly unpack these statements.

In the discrete and finite, a complete set of glocal observables is explicitly constructed by the theorem of Section \ref{sec:completeGlocal}, by taking all invariant graph polynomials for connected multigraphs which have total sum of their weights less or equal to $D \choose 2$, where $D=\frac12 N (N+1)$. In Section \ref{regge calc} we saw that this provides complete observables for discrete general relativity on simplicial complexes with a finite number of nodes. This amounts to enumerating all such multigraphs, which would take a long time, but can always be done. Alternatively, a minimal generating set can be computed using one of the known finite steps algorithms, see Section \ref{sec:minimalGeneratingSets}, which are guaranteed to successfully terminate but will also take a long time. Which procedure is faster is not known. In this sense, although impractical, a complete set of glocal observables can always be constructed. 

In the infinite, we encounter anti--classification results from descriptive set theory. In loose language, these are no--go theorems about the possibility of defining observables that distinguish isomorphic objects if we only allow certain mathematical resources, which far surpass those generally employed in mathematical and theoretical physics. For example, if we allow the target space of an observable to be an arbitrary set, and we allow the use of the axiom of choice to define a function, then we can always define a complete observable: simply map each element of an orbit to a given representative of the orbit. But this use of the axiom of choice only allows to state that something exists, the `observable' so defined does not have any practical value. In physics, we are interested in \textit{constructible} observables.

In the case of graphs with countable nodes, several anti--classification results state that no complete observable can be constructed \cite{hjorth2000classification, ros2021classificationproblemsdescriptiveset,lupini2018gamesorbitsplayobstructions}. However, we do have at our disposal many explicit functions which through the standard physicist toolbox of regularization techniques will yield observables on \emph{some} sector of infinite weighted graphs. For instance, we can write down expressions such as ${\mathcal R(X^M)= \lim_{N \rightarrow \infty} \sum_{\sigma \in \S_N} X^{\sigma \cdot M} }$ which will converge if fall-off conditions are imposed, similar to considering asymptotically flat spacetimes. This kind of pathological behavior for invariant functions should not be considered more surprising than that the function $f(x)=x^2$ is not integrable over the reals because the Riemann integral does not converge. 

That a complete set of observables can not be constructed for full general relativity, even for the vacuum sector, was shown recently in \cite{Panagiotopoulos_2023}. In this view, the `problem of observables' is that the action of diffeomorphisms on the space of metrics is ergodic, each orbit is `everywhere' in the space of solutions, and it is impossible to tell them apart. As discussed above, this is not a phenomenon inherent to the continuum.  The importance of \cite{Panagiotopoulos_2023} was to give a first such result for general relativity, in fact, it demonstrates a milder form of anti--classification obstruction than what has been already proven for countable graphs. Anti--classification obstructions in general relativity are probably more severe than those for countable graphs but more difficult to prove. Nevertheless, as discussed in the previous section, by restricting on appropriate sectors, vast sets of observables can be written down also for the continuum. 

In loose terms, infinities have the habit to cause mathematical problems whenever we build them in a model of physics, unless we tame them somehow. Searching for generic complete sets of observables on infinite structures is to request mathematics to do something it cannot. If completeness is desired in the infinite case, this might be possible but the space of solutions of the theory will need to be significantly restricted.

The lessons learned in this work suggest a strategy to recover completeness in the infinite case. Incompleteness does not seem to imply that there is `some other kind of information' that glocal observables cannot capture. Rather, the issue seems to be that glocal observables diverge or are otherwise ill-defined if the sector of solutions is not appropriately restricted. Therefore, demanding that glocal observables are well-defined may allow to identify appropriate sectors for the infinite case so that completeness is recovered. 

\section*{Summary}

In this work, we demonstrated how the main conceptual insight of general relativity, background independence, can be implemented fully for theories defined on graphs. The lack of background, with respect to which to define localization a priori, implies that the invariants will be global functions of the entire spacetime. We have seen that (i) this is directly analogous to the task of completely determining the label-invariant information held on a graph: the information encoded on a weighted graph modulo graph isomoprhisms (ii) there is no tension with the fact that glocal information needs to be captured globally, this can be done completely through the kind of invariants we have termed glocal observables.

We argued that general covariance in a discrete setting corresponds to the invariance under all permutations of the node labels of a weighted graph, implemented so that the adjacency \emph{relations} among weights are preserved.  Then, weighted graphs seen as a vector space provide a natural arena to accommodate background independent theories of physics, with the weights specifying a discrete spacetime metric. Before specifying any configuration, all points are the same, exchangeable, and may in principle be all connected to each other. Locality, both in the sense of topological connectivity and in terms of distance, are specified at the same time, by the specification of a weighted graph. 

Each glocal observable is an invariant average that captures information about a type of local correlation. There is no contradiction between locality and the global character of these observables. For instance, it may be that a certain type of local correlation only appears once in a weighted graph, a unique local feature. The observable will search the entire graph and return this local information.

Complete sets of glocal observables can be built out of algebraic invariants. Generating sets are constructed out of graph invariants. Then, we have seen that any algebraic invariant can be algebraically reduced to graph invariants associated with connected subgraph structures---glocal observables. In addition, it is guaranteed that a finite and complete set of glocal observables exists. Their construction is algorithmic and terminates in a finite number of steps. These complete sets of glocal observables capture \emph{all} label independent information on the graph, and generate all other observables. In this precise sense, all geometrical information on a weighted graph \emph{is} glocal information.

We have seen how the formalism can be applied to background independent theories of discrete spacetime. We pointed out that to the degree Lorentzian Regge calculus approximates general relativity, the structures described above provide a constructible resolution of the problem of observables. We have also seen how to construct a complete set of invariants for the quantum case of spin networks, the kinematical setup of loop quantum gravity. We constructed a symmetrised state space as the span of equal superpositions of labeled spin networks in an orbit---unlabelled spin networks---the natural label-independent quantum states arising from the formalism. This basis yields a constructible background-independent definition of quantum operations that change graph connectivity, and can be applied to the study of general transitions between quantum geometries. We proposed that this native--to--the--discrete procedure is how the spatial diffeomorphism constraint of general relativity can be implemented in loop quantum gravity.

This work sketches deep connections between questions of fundamental importance for the theory of spacetime and the theory of computation: the problem of observables and the graph isomorphism problem. The foundational reasoning that leads to both questions is common, as both concern the encoding of local information independently of a background structure. In computer science, this allows to distinguish non--isomorphic graphs, in gravitational physics to distinguish different spacetime geometries. We hope to have adequately made the case to further uncover the links between these traditional lines of investigation.

\section*{outlook}

We close with speculative perspectives on how this work may open paths for computing methods to be used for the characterization of classical and quantum spacetimes.

Parallels with recent influential ideas in deep learning are intriguing. Permutation invariance underlies the design and universality of certain architectures of machine learning models  \cite{zaheer2018deepsets, fereydounian2022functionsgraphneuralnetworks, kimura2024permutationinvariantneuralnetworks}, in order to avoid learning fictitious dependencies on arbitrary labels used to describe the input. These models are known to process data efficiently when working with unordered or graph structured data. Permutation-invariant deep learning algorithms could be trained on label--dependent solutions of the equations of motion, and learn the correlation structures between the permutation-invariant observables. This might allow for a formulation of the laws of physics for discrete spacetime in terms of glocal observables.

Explicitly finding a complete set of observables for the space of all weighted graphs up to some number of nodes is tantamount to solving the graph isomorphism problem, and is expected to remain computationally costly. However, the graph isomorphism problem is known to scale polynomialy in time for bounded-degree graphs \cite{luks1982isomorphism}. Therefore, an interesting question is whether complete sets of observables with cardinality considerably smaller than suggested by the theorem of Section \ref{theorem} can be constructed for discrete general relativity, if we restrict to simplicial complexes with bounded-valence skeletons. The same question is posed for the kinematical space of loop quantum gravity, since standard formulations restrict to only four--valent spin networks. 

The relation of permutation invariance and entanglement is also intriguing. The invariant graph states of spin networks naturally arising from our formalism are highly entangled superpositions of the elements in an orbit of permutations. This is the natural quantum way to erase label dependence, which can be thought of as assigning to each point an equal superposition of all labels, in complete analogy to what happens with bosonic systems. 

It is then natural to extend this work to invariants under quantum changes of coordinates \cite{Kabel:2024lzr, Hardy2018construction,Hardy2019quantum,Hardy2016operational,delaHamette2022quantum,MekonnenGalleyMueller2025QuantumPermutationsParastatistics}: preliminary results have shown that the formalism developed here yields a good definition of invariant entanglement for quantum graph states \cite{EmilMitrakos}. A related avenue of investigation is to extend our analysis to invariants under the quantum group of permutations \cite{Weber2023QuantumPermutationMatrices,atserias2019quantum,ofekMarios}.

Finally, it may be possible to study quantum geometry with permutation invariant quantum computing codes, which are seeing a recent surge of interest \cite{Jordan2009_PermutationalQC,mansky2025solvinggraphproblemsusing,Heredge2024PermutationInvariant,ZhengEtAl2022_PQCplus, Anschuetz_2023,PhysRevA.100.052317}. We also note that a most well-known benchmark for quantum platforms, boson sampling  \cite{BulmerEtAl2021, ChabaudDeshpandeMehraban2022,DeshpandeEtAl2021,RahimiKeshariEtAl2015,ZhongEtAl2020,HamiltonEtAl2017,AaronsonArkhipov2013,TillmannDakić2013,Wei_2010}, is the calculation of the permanent, one of the observables we have called here purely global. It is particularly interesting to look into these possible connections in relation with bosonic formulations of quantum geometry \cite{Bianchi:2016hmk,
Freidel:2010tt,
Freidel:2010bw, Oriti:2013aqa}.  \\ \bigskip

\begin{acknowledgements}
We thank Jeremy Butterfield and Aristotelis Panagiotopoulos for insightful discussions on the importance of permutation invariance and its relation to the problem of observables.  

 We acknowledge support of the ID~\#~62312 grant from the John Templeton Foundation, as part of the ``Quantum Information Structure of Spacetime (QISS)'' project (\hyperlink{http://www.qiss.fr}{qiss.fr}). Emil Broukal acknowledges support of the grant ID~\#~63132 from the John Templeton Foundation, as recipient of an Enrico Fermi Fellowship awarded through the Center for SpaceTime and the Quantum. The opinions expressed in this publication are those of the author(s) and do not necessarily reflect the views of the respective funding body. Eugenio Bianchi acknowledges support from the National Science Foundation, Grants No. PHY-2207851 and PHY-2513194.
\end{acknowledgements}

\raggedbottom

\bibliography{refs}

\bigskip
\onecolumngrid

\appendix

\section{Definitions}\label{graph defs}

\subsection{From abstract graphs to adjacency matrices of weighted graphs}

{\bf Abstract graph.} A (simple undirected) graph on $N$ nodes is given by a pair $G=(V,E)$, where $V$ is a finite set consisting of $N$ elements, called nodes (or vertices) and $E$ is a set of unordered pairs of nodes, whose elements are called edges.\\

{\bf Abstract graph isomorphism.} Let $G,H$ be two graphs and let $V(G),V(H)$ and $E(G),E(H)$ denote their vertex and edge sets respectively. An isomorphism between $G$ and $H$ is a bijection $\varphi:V(G)\xrightarrow[]{}V(H)$ such that $\{u,v\}\in E(G) \Leftrightarrow\{\varphi(u),\varphi(v)\}\in E(H)$, i.e. two nodes $u,v$ in $G$ are adjacent, if and only if $\varphi(u),\varphi(v)$ are adjacent in $H$. If an isomorphism exists between two graphs $G,H$, they are called isomorphic and denoted as $G\cong H$.\\

{\bf Labeling and labeled graphs.} A labeling of a graph $G=(V,E)$ on $N$ nodes is an enumeration of the set of nodes $V$ by the consecutive integers $\Tilde{V}=\{1,...,N\}$, i.e. an bijection $l:V\xrightarrow[]{}\Tilde{V}$. Note that this induces an enumeration of the edges as well: $\Tilde{E}=\big\{\{i,j\} ~|~ \{l^{-1}(i),l^{-1}(j)\}\in E \big\}$. We call the pair of enumerated nodes and edges $\Gamma=(\Tilde{V},\Tilde{E})$ a labeled graph. Note that giving a labeling involves a choice, namely the order in which one enumerates the graphs nodes. Choosing a specific enumeration is of course pure convention.\\

{\bf Adjacency matrix of a labeled graph.} Let $\Gamma=(\{1,...,N\},E)$ be a labeled graph. We define its adjacency matrix $A_{\Gamma}$ as
\begin{align}
    (A_{\Gamma})_{ij}=\begin{cases}
        1\text{, if }\{i,j\}\in E\\
        0\text{, else}
    \end{cases}
\end{align}
From this definition it follows that $(A_{\Gamma})_{ij}=(A_{\Gamma})_{ji}$ and $(A_{\Gamma})_{ii}=0$. Note that the assignment of the rows and columns of $A_{\Gamma}$ to the nodes of $\Gamma$ is only possible unambiguously because it is labeled.\\

{\bf Weighted graph.} Let $G=(V,E)$ be a graph. We call a function
\begin{equation}
\begin{aligned}
    w_n:~V&\longrightarrow\mathbb{R}\\
    v&\longmapsto w_n(v)
\end{aligned}
\end{equation}
a \textit{node weight function} and
\begin{equation}
\begin{aligned}
    w_e:~~~~~E&\longrightarrow\mathbb{R}\setminus\{0\}\\
    \{v,u\}&\longmapsto w_e(\{v,u\})
\end{aligned}
\end{equation}
an \textit{edge weight function}. Finally, we call $G_w=(V,E,w_n,w_e)$ a weighted graph.\\

{\bf Labeled weighted graph.} We call a pair $G=(\Gamma,w)$, where $\Gamma$ is a labeled graph on $N$ nodes and $w=(w_n,w_e)$ with $w_n (w_e)$ a node (edge) weight function on $\Gamma$, a labeled weighted graph.\\

{\bf Adjacency matrix of a labeled weighted graph} Let $G=(\Gamma,w)$ be a labeled weighted graph. We define its weighted adjacency matrix $A_G$ as
\begin{align}
    (A_G)_{ij}=\begin{cases}
        w_n(i)\text{, if } i=j\\
        w_e(\{i,j\})\text{, if }\{i,j\}\in E\\
        0,\text{ else}
    \end{cases}
\end{align}
Again, it follows from this definition that $(A_G)_{ij}=(A_G)_{ji}$. Note that a labeled weighted graph $G$ is fully specified by its weighted adjacency matrix $A_G$: Let $A\in\text{Sym}(\mathbb{R},N)$, that is a symmetric $N\times N$ matrix with real entries, and define $E=\{\{i,j\} ~|~ i< j \in \{1,...,N\}\text{ and }A_{ij}\neq 0\}$, $w_n:\{1,...,N\}\rightarrow\mathbb{R},~ w_n(i)=A_{ii}$, and $w_e:E\rightarrow\mathbb{R},~ w_e(\{i,j\})=A_{ij}$. Then $(\{1,...,N\},E,w_n,w_e)$ is a labeled weighted graph. \\

{\bf The vector space of labeled weighted graphs} We denote by $\mathcal G_N$ the space of all labeled weighted graphs over $N$ nodes. This can be presented as a vector space, see Section~\ref{ref:vectorSpace}. As a vector space, $\mathcal G_N$ is isomorphic to the space of $N\times N$ adjacency matrices.\\

{\bf Non-trivial connected component} Let $\Gamma=(\{1,..,N\},E)$ be a labeled graph. A labeled graph is called \textit{connected} if their exists a path along edges from every node to every other. Let $c=(V_c,E_c)$ where $V_c\subset\{1,...,N\}$ and ${E_c=\{\{i,j\}~|~ \{i,j\}\in E \text{ and } i,j\in V_c\}}$. $c$ is then called a \textit{subgraph} of $\Gamma$. Furthermore, we call $c$ a \textit{non-trivial connected component} of $\Gamma$, if $c$ has at least one edge and is a maximal connected subgraph, meaning that including any additional node and the corresponding edges would make it non-connected.\\

{\bf Quasi connected graph} Let $G=(\Gamma,w)$ be a labeled weighted graph. We call $G$ \textit{quasi connected}, if $\Gamma$ has exactly one non-trivial connected component, or $G$ has no edges and exactly one weighted node.

\subsection{Relabelings and permutations}

{\bf Action of $\S_N$ on labeling induces action on adjacency matrices.} Let $A_G\in \text{Sym}(\mathbb{R},N)$ be a weighted adjacency matrix of a labeled weighted graph $G$. A relabeling of $G$ is a transformation of its node set by an element $\sigma$ of $\S_N$, i.e. $n\mapsto \sigma(n)$, which naturally induces a transformation of its edge set: $\sigma\cdot E=\big\{\{\sigma(i),\sigma(j)\}~|~ \{i,j\}\in E\big\}$. Note that this extension preserves the adjacency relations of the graph. This in turn, induces an action of $\S_N$ on the weighted adjacency matrix $A_G$, which transforms into a new weighted adjacency matrix $\sigma\cdot A_G$, satisfying
\begin{align}
    (\sigma\cdot A_G)_{\sigma(i)\sigma(j)}=(A_G)_{ij}.
\end{align}

{\bf Graph isomorphism and action of $\S_N$} Let $G=(\{1,..,N\},E^G,w^G_n,w^G_e)$ and $H=(\{1,...,N\},E^H,w^H_n,w^H_e)$ be two labeled weighted graphs. We say that $G$ and $H$ are isomorphic as weighted graphs, if there exists an isomorphism $\varphi$ between the two labeled graphs $G$ and $H$ without weights and in addition
\begin{align}
    w^G_n\circ\varphi^{-1}=w^H_n\\
    w^G_e\circ\varphi^{-1}=w^H_e
\end{align}
holds. It is a well known fact from graph theory that two simple undirected graphs $G,H$ are isomorphic, if and only if there exists a permutation $\sigma\in\S_N$ such that $\sigma\cdot A_G=A_H$, or equivalently $(A_G)_{ij}=(A_H)_{\sigma(i)\sigma(j)}$. This alternative characterization of isomorphism naturally carries over to the case of labeled weighted graphs word for word: two labeled weighted graphs $G,H$ are isomorphic if and only if there exists a permutation $\sigma\in\S_N$ such that $(A_G)_{ij}=(A_H)_{\sigma(i)\sigma(j)}$ holds, where $A_G,A_H$ denote their weighted adjacency matrices respectively.\\

\section{Reynolds operator and invariant graph polynomials}
\label{sec:propertiesReynolds}
{\bf Properties of the Reynolds operator} Let $\mathcal{R}:\mathbb{R}[\mathcal{G}_N]\longrightarrow\mathcal{I}^N$ be the operator as defined in \eqref{eq:reynolds_def}. Then $\mathcal{R}$ has the following properties:
\begin{itemize}
    \item $\mathcal{R}$ is linear.
    \item $\mathcal{R}(1)=1$, where $1$ denotes the constant polynomial with value 1.
    \item $\mathcal{R}$ satisfies the Reynolds property: For any $p,q\in\mathbb{R}[\mathcal{G}_N]$, $\mathcal{R}(\mathcal{R}(p)q)=\mathcal{R}(p)\mathcal{R}(q)$.
    \item $\mathcal{R}$ acts trivially on $\mathcal{I}^N\subset\mathbb{R}[\mathcal{G}_N]$.
\end{itemize}
Linearity follows from the fact that the action of $\S_N$ on $\mathcal{G}_N$ is defined to be linear. The second property can be seen by:
\begin{align}
    \mathcal{R}(1)=\frac{1}{|\S_N|}\sum_{\sigma\in\S_N}\sigma\cdot 1=\frac{1}{|\S_N|}\sum_{\sigma\in\S_N}=1.
\end{align}
Similarly, the Reynolds property can be shown as follows:
\begin{align}
    \mathcal{R}(\mathcal{R}(p)q)&=\frac{1}{|\S_N|}\sum_{\sigma\in\S_N}\sigma\cdot\big(\mathcal{R}(p)q\big)=\frac{1}{|\S_N|^2}\sum_{\sigma\in\S_N}\sum_{\pi\in\S_N}\sigma\cdot\big((\pi\cdot p)q\big)=\frac{1}{|\S_N|^2}\sum_{\sigma\in\S_N}\sum_{\pi\in\S_N}(\sigma\cdot\pi\cdot p)(\sigma\cdot q)\\
    &=\frac{1}{|\S_N|^2}\sum_{\sigma\in\S_N}\sum_{\pi'\in\S_N}(\pi'\cdot p)(\sigma\cdot q)=\frac{1}{|\S_N|^2}\sum_{\pi'\in\S_N}(\pi'\cdot p)\sum_{\sigma\in\S_N}(\sigma\cdot q)=\mathcal{R}(p)\mathcal{R}(q),
\end{align}
where we used that $\sigma\cdot(pq)=(\sigma\cdot p)(\sigma\cdot q)$ for any $\sigma\in\S_N,~p,q\in\mathbb{R}(\mathcal{G}_N)$. Note that from the second and third properties, it follows that $\mathcal{R}$ is idempotent:
\begin{equation}
    \mathcal{R}(\mathcal{R}(p))=\mathcal{R}(\mathcal{R}(p)1)=\mathcal{R}(p)\mathcal{R}(1)=\mathcal{R}(p).
\end{equation}
The final property can be shown in the following way: Let $p\in\mathcal{I}^N$. Then
\begin{equation}
    \mathcal{R}(p)=\frac{1}{|\S_N|}\sum_{\sigma\in\S_N}\sigma\cdot p=\frac{1}{|\S_N|}\sum_{\sigma\in\S_N} p=p.
\end{equation}
\\

{\bf Properties of invariant graph polynomials} We will now show some properties of invariant graph polynomials, i.e. invariant polynomials obtained by acting on graph monomials with the Reynolds operator. Consider two weighted graphs $G_1,G_2\in\mathcal{G}_N$ that are isomorphic, i.e. there exists a $\sigma\in\S_N$ such that $\sigma\cdot G_1=G_2$. Then it straightforwardly follows that $\mathcal{O}_{G_1}=\mathcal{O}_{G_2}$, that is, their orbits coincide. Thus, trivially, $\mathcal{R}(X^{G_1})=\mathcal{R}(X^{G_2})$, as
\begin{align}
    \mathcal{R}(X^{G_1})=\frac{1}{|\mathcal{O}_{G_1}|}\sum_{H\in\mathcal{O}_{G_1}}X^H=\frac{1}{|\mathcal{O}_{G_2}|}\sum_{H\in\mathcal{O}_{G_2}}X^H=\mathcal{R}(X^{G_2}).
\end{align}
Furthermore, recall the formula for the product of two invariant graph polynomials:
\begin{equation}\label{eq:Rexplcit}
    \mathcal{R}(X^{G_1})\mathcal{R}(X^{G_2})=\frac{1}{|\mathcal{O}_{G_2}|}\sum_{H\in\mathcal{O}_{G_2}}\mathcal{R}(X^{G_1+H}).
\end{equation}
This can be shown as follows:

\begin{align}
    \mathcal{R}(X^{G_1})\mathcal{R}(X^{G_2})=\mathcal{R}\big(X^{G_1}\mathcal{R}(X^{G_2})\big)=\mathcal{R}\bigg(\frac{1}{|\mathcal{O}_{G_2}|}\sum_{H\in \mathcal{O}_{G_2}}X^{G_1+H}\bigg)=\frac{1}{|\mathcal{O}_{G_2}|}\sum_{H\in \mathcal{O}_{G_2}}\mathcal{R}(X^{G_1+H}),
\end{align}

\noindent by the Reynolds property and linearity. Equation~\eqref{eq:Rexplcit} shows that the product of any two invariant graph polynomials can be calculated by considering every possible way to superimpose the two underlying graphs and sum over all possible invariant graph polynomials obtained by applying the Reynolds operator to these superimpositions. This can be used to express any invariant graph polynomial in terms of a set of primitive ones, which is captured in the following theorem.

\begin{theorem*}
    Let $\mathcal{C}$ be the set of all invariant polynomials of the form $\mathcal{R}(X^{M_1})\cdot\cdot\cdot\mathcal{R}(X^{M_k})$, where each $M_i$ is a quasi connected multigraph with $n_i$ non-isolated nodes and sum of weights $W(M_{i})$ such that $n_1+...+n_k\leq N$ and $\sum_{i=1}^kW(M_i)=d$. Then $\mathcal{C}$ is a vector space basis for the homogeneous component $\mathcal{I}^N_d$.
\end{theorem*}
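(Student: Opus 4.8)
The plan is to exhibit $\mathcal{C}$ as the image of the standard monomial--orbit basis of $\mathcal{I}^N_d$ under an invertible, triangular change of basis. First I would recall that basis. The degree--$d$ graph monomials $X^M$, with $M$ a multigraph in $\mathcal{G}_N$ and $W(M)=d$, form a basis of the degree--$d$ part of $\mathbb{R}[\mathcal{G}_N]$ which is permuted by $\S_N$; the orbits are precisely the isomorphism classes of such $M$, and \eqref{eq:invariantgraphpolynomial} identifies $\mathcal{R}(X^M)$ with the corresponding orbit sum up to the positive scalar $1/|\mathcal{O}_M|$. Since orbit sums of a permuted basis span the invariant subspace and are linearly independent, $\{\mathcal{R}(X^M)\}$, one representative per isomorphism class of multigraph with at most $N$ non-isolated nodes and $W(M)=d$, is a basis of $\mathcal{I}^N_d$.

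Next I would set up the indexing of $\mathcal{C}$. Every multigraph $M$ has a unique decomposition $M=M_1\sqcup\dots\sqcup M_k$ into quasi connected pieces --- its non-trivial connected components together with its weighted isolated nodes --- and then $X^M=X^{M_1}\cdots X^{M_k}$ because the pieces involve disjoint variables. Hence the product $\mathcal{R}(X^{M_1})\cdots\mathcal{R}(X^{M_k})$, which I will call $P_M$, is a well-defined element of $\mathcal{C}$ depending only on the isomorphism class of $M$; conversely every element of $\mathcal{C}$ is some $P_M$, with $\sum_i n_i$ equal to the number of non-isolated nodes of $M$ (so $\le N$) and $\sum_i W(M_i)=W(M)=d$. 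Thus $M\mapsto P_M$ surjects the isomorphism classes of the previous paragraph onto $\mathcal{C}$; bijectivity will follow a posteriori.

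The main step is triangularity. Iterating the Reynolds product formula \eqref{eq:reynoldsExplicit} gives $P_M=\sum_H c_H\,\mathcal{R}(X^H)$, summed over isomorphism classes of multigraphs $H$ obtained as overlays of relabelled copies of $M_1,\dots,M_k$ inside $K_N$, with all coefficients $c_H>0$. The combinatorial core is the monotonicity lemma: starting from the disjoint union and identifying any node of one piece with a node of another strictly decreases the number of (quasi) connected components, so the disjoint placement $H=M$ is the unique overlay attaining the maximal component count $k$, and every other $H$ occurring has strictly fewer components than $M$. Since $\sum_i n_i\le N$ the disjoint placement does occur, so the coefficient of $\mathcal{R}(X^M)$ in $P_M$ is positive. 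Ordering isomorphism classes by decreasing number of quasi connected components (refined to any total order), the matrix expressing the $P_M$ in the basis $\{\mathcal{R}(X^H)\}$ is upper triangular with strictly positive diagonal, hence invertible; therefore the $P_M$ are linearly independent and span $\mathcal{I}^N_d$, i.e. $\mathcal{C}$ is a basis, and in passing the $P_M$ are pairwise distinct with $|\mathcal{C}|=\dim\mathcal{I}^N_d$.

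I expect essentially all the work to sit in this last step: pinning down the iteration of \eqref{eq:reynoldsExplicit} to $k$ factors with the positivity of the coefficients, and proving the monotonicity lemma cleanly, including the borderline configurations where a single weighted node is overlaid onto an edge or onto another weighted node (where one must check that component count still drops). The remaining ingredients --- uniqueness of the quasi connected decomposition, the bookkeeping that $\sum_i n_i$ is the number of non-isolated nodes of $M$, and the fact that $\mathcal{R}$ and multiplication preserve homogeneous degree --- are routine.
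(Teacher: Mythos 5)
Your proof is correct, and it rests on the same combinatorial heart as the paper's: the Reynolds product formula \eqref{eq:reynoldsExplicit} iterated over the quasi connected pieces of $M$, together with the observation that the node-disjoint superimposition is the unique one (up to isomorphism) preserving the component count, while every overlapping placement strictly decreases it. The difference is in how that fact is deployed. The paper runs a downward induction on the number of components to show that each orbit sum $\mathcal{R}(X^M)$, and hence each invariant, lies in the span of $\mathcal{C}$ --- which establishes spanning only. You instead fix the orbit-sum basis $\{\mathcal{R}(X^H)\}$ of $\mathcal{I}^N_d$ and exhibit the matrix expressing the products $P_M$ in that basis as triangular with positive diagonal entries with respect to the component-count order. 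This buys you something the paper's written proof does not actually deliver: the linear independence of $\mathcal{C}$, and hence the full "basis" claim of the theorem statement (the paper's argument, as written, only needs and only proves the spanning half, which is all that the subsequent generating-set theorem uses). Your flagged borderline cases --- a weighted isolated node landing on a node of another piece, or two weighted isolated nodes coinciding --- are exactly the configurations one must check for the monotonicity lemma, and they do behave as you say. One small bookkeeping point to make explicit when writing it up: the hypothesis $n_1+\cdots+n_k\leq N$ is what guarantees the disjoint placement is realizable inside $K_N$, so the diagonal entry is genuinely nonzero; without it $P_M$ could fail to contain the term $\mathcal{R}(X^M)$ at all.
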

\begin{proof}
    Let $p\in\mathcal{I}^N_d$. As $p$ is a homogeneous polynomial of degree $d$, it can be written as a linear combination of monomials $q_i$ of degree $d$, i.e. $p=\sum_i\alpha_iq_i$. Then each $q_i=X^{M_{q_i}}$ where $M_{q_i}$ is the multigraph on $N$ nodes with weights given by the exponents in $q_i$. Note that $W(M_{q_i})=d$. As $p$ is invariant, we have that
    \begin{equation}
        p=\mathcal{R}(p)=\sum_i \alpha_i \mathcal{R}(X^{M_{q_i}}).
    \end{equation}
    To complete the proof, we show that for any multigraph $M$, $\mathcal R(X^M)$ can be written as a linear combination of elements in $\mathcal C.$
    Let $M$ be a multigraph on $N$ nodes with $W(M)=d$ with $k$ connected components $c_1,\dots,c_k$. Assume $k>1$
and let $\Bar{c}_1,\dots,\Bar{c}_k\in \mathcal G_N$ be the graphs obtained by taking each connected component $c_i$ of $M$ and adding to it $N-n_i$ isolated nodes with weight $0$, where $n_i$ denotes the number of nodes of $c_i$. By construction, each $\Bar{c}_i$ has exactly one connected component and they satisfy $n_1+\cdots+n_k\leq N$ and $\sum_i W(\Bar{c}_i)=d$. By the properties of invariant graph polynomials, it then follows that
\begin{equation}
    \mathcal{R}(X^{\Bar{c}_1})\cdot\cdot\cdot\mathcal{R}(X^{\Bar{c}_k})=\mathcal{R}(X^M)+\sum_{i}\mathcal{R}(X^{h_i}),
\end{equation}
where each $h_i$ is a multigraph with strictly less then $k$ non-trivial connected components. By induction on the number of connected components $k$, it follows that $\mathcal{R}(X^M)$ is a linear combination of products of the form $\mathcal{R}(X^{\Bar{c}_1})\cdot\cdot\cdot\mathcal{R}(X^{\Bar{c}_k})$. Finally, note that there is nothing to proof for $k=0,1$, as then $M$ already only has at most one non-trivial connected component.
\end{proof}

{\bf Proof of theorem \ref{theorem}} \\
We now prove a slightly stronger version of the theorem in the main text, which then follows as a corollary. 
\begin{theorem*}
    Let
    $I^N_C = \left\{ O^M ~\middle|~ W(M)\leq \beta(\mathcal I^N)\right\},$
    where $M$ ranges over all quasi-connected multigraphs and $W(M)$ is the sum of the weights on $M$. Then $I^N_C$ generates the full invariant algebra $\mathcal{I}^N$.
\end{theorem*}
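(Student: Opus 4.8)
\emph{Proof proposal.} The plan is to bootstrap off the basis theorem for the homogeneous components $\mathcal{I}^N_d$ just established, together with the definition of $\beta(\mathcal{I}^N)$ and the gradedness of the invariant algebra; once those are in place the statement is a short degree count. First I would record that $\mathcal{I}^N$ is a graded subalgebra of $\mathbb{R}[\mathcal{G}_N]$: the $\SN$-action permutes the dual variables $x_{[i,j]}$ and hence preserves total polynomial degree, so the homogeneous components of any invariant are themselves invariant. By Hilbert's finiteness theorem $\mathcal{I}^N$ admits a finite generating set, and by the very definition of $\beta(\mathcal{I}^N)$ there is one, call it $I^\ast$, whose top degree equals $\beta(\mathcal{I}^N)$; replacing each element of $I^\ast$ by its homogeneous components produces a generating set consisting of homogeneous invariants, each of degree $d \le \beta(\mathcal{I}^N)$. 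It therefore suffices to prove that every homogeneous invariant $p \in \mathcal{I}^N_d$ with $d \le \beta(\mathcal{I}^N)$ lies in the subalgebra generated by $I^N_C$.

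Next I would apply the preceding basis theorem, which exhibits $\mathcal{I}^N_d$ as the linear span of products $\mathcal{R}(X^{M_1})\cdots\mathcal{R}(X^{M_k})$ with each $M_i$ quasi connected, $n_1+\cdots+n_k \le N$, and $\sum_{i=1}^k W(M_i) = d$. Since all weights on a multigraph are non-negative integers, each factor obeys $W(M_i) \le \sum_j W(M_j) = d \le \beta(\mathcal{I}^N)$, so $O^{M_i} = \mathcal{R}(X^{M_i})$ is a glocal observable belonging to $I^N_C$. Hence each spanning element of $\mathcal{I}^N_d$ is a finite product of elements of $I^N_C$, i.e. $\mathcal{I}^N_d \subseteq \langle I^N_C\rangle$ for every $d \le \beta(\mathcal{I}^N)$; in particular the homogeneous generators coming from $I^\ast$ all lie in $\langle I^N_C\rangle$, and since $\langle I^N_C\rangle \subseteq \mathcal{I}^N$ holds trivially we conclude $\mathcal{I}^N = \langle I^N_C\rangle$. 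The main-text version with cutoff $\binom{D}{2}$, $D=\dim(\mathcal{G}_N)$, follows as a corollary: by \eqref{degree bounds} we have $\beta(\mathcal{I}^N) \le \binom{D}{2}$, so the set with the larger cutoff contains $I^N_C$ and a fortiori generates $\mathcal{I}^N$.

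The substantive content is not in this argument but in the basis theorem it rests on, and I expect the main obstacle to be the step inside that theorem: using the Reynolds product formula \eqref{eq:reynoldsExplicit} to rewrite the invariant graph polynomial of a disconnected multigraph as a product of quasi connected invariant graph polynomials plus a correction supported on multigraphs with strictly fewer connected components, and then closing the induction on the number of components. Within the present theorem the only points needing care are the gradedness reduction (so that one may assume homogeneous generators of degree $\le \beta(\mathcal{I}^N)$) and the inequality $W(M_i)\le d$, which is precisely what certifies that each factor of a degree-$d$ spanning product qualifies for membership in $I^N_C$.
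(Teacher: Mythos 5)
Your proposal is correct and follows essentially the same route as the paper's proof: take a generating set realizing $\beta(\mathcal{I}^N)$, split it into homogeneous components, expand each component in the basis of products $\mathcal{R}(X^{M_1})\cdots\mathcal{R}(X^{M_k})$ of quasi connected invariant graph polynomials from the preceding theorem, and observe that each factor satisfies $W(M_i)\le d\le\beta(\mathcal{I}^N)$ and hence lies in $I^N_C$. Your write-up is in fact slightly more careful than the paper's on two minor points it leaves implicit --- that homogeneous components of invariants are themselves invariant, and that non-negativity of the multigraph weights is what gives $W(M_i)\le d$.
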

\begin{proof}
By Hilbert's finiteness theorem, the existence of a minimal generating set for $\mathcal{I}^N$ is guaranteed. In addition, $\beta(\mathcal{I}^N)$ is well defined. Note that by enlarging a generating set, the generating property will not be lost. We will now show that any minimal generating set can be enlarged to $I^N_C$. Consider an arbitrary minimal generating set of $\mathcal{I}^N$, and denote it by $I$. Decompose each polynomial contained in $I$ into its homogeneous components. Denote the new set of homogeneous polynomials obtained this way by $\Bar{I}$. One thus has a set of polynomials where each belongs to some sector $\mathcal{I}^N_d$, where $d=0,\cdots,\beta(\mathcal{I}^N)$. Recall that by the above theorem, one can express each polynomial in $\Bar{I}$ as a sum of terms of the form $\mathcal{R}(X^{M_1})\cdot\cdot\cdot\mathcal{R}(X^{M_k})$, where each $M_i$ is a quasi connected multigraph such that $\sum_{i=1}^KW(M_i)=d$, where $d$ is the degree of the homogeneous polynomial one wants to write in this basis. Next, express each element of $\Bar{I}$ in such a basis and add every basis element of every $\mathcal{I}^N_d$, $d=0,...,\beta(\mathcal{I}^N)$ that is not yet contained in $\Bar{I}$, creating a larger set $\tilde{I}$ that is no longer minimal. Finally, simply decompose each linear combination of products in $\tilde{I}$ into its constituents, which are all the invariant graph polynomials $\mathcal{R}(X^{M_i})$, where $M_i$ is a quasi connected multigraph with $W(M_i)\leq\beta(\mathcal{I}^N)$. This then gives exactly $I^N_C$.
\end{proof}
The above two theorems are a generalisation of Proposition 2.1 in \cite{thiéry2008}.
\section{Complete observables}
\label{sec:FiniteGeneratingSetsCompleteObservables}
{\bf Finite generating sets are complete observables} Let $\{O_1,...,O_r\}$ be a finite collection of observables. We call $\{O_1,...,O_r\}$ a complete set of observables, if and only if for any $A,B\in\mathcal{G}_N$,
\begin{align}
    A\cong B \Leftrightarrow O_i(A)=O_i(B)~\forall i=1,...,r
\end{align}
holds. The above condition means that $\{O_1,...,O_r\}$ separates orbits of $\mathcal{G}_N$ under $\S_N$. A finite generating set for $\mathcal{I}^N$ will also constitute a complete set of observables on $\mathcal{G}_N$. We will now give the proof of this (for more details, see \cite[Theorem 10]{inv2}).\\

\begin{proof}
    
Let $\{i_1,...,i_r\}$ be a finite generating set for $\mathcal{I}^N$ and consider two graphs $G_1,G_2\in\mathcal{G}_N$. If $G_1\cong G_2$, then it trivially follows that $i_k(G_1)=i_k(G_2),~ \forall k=1,...,r$. So we assume that $G_1$ and $G_2$ are not isomorphic, which implies that $\mathcal{O}_{G_1}\cap\mathcal{O}_{G_2}=\emptyset$. We will now construct an invariant $g\in \mathcal{I}^N$ such that $g(G_1)\neq g(G_2)$. First, define $\Omega=\mathcal{O}_{G_1}\cup\mathcal{O}_{G_2}\setminus\{G_1\}$. Note that by the following identifications $\mathcal{G}_N\cong\text{Sym}(N,\mathbb{R})\cong\mathbb{R}^{\frac{N(N+1)}{2}}$, we can identify $\Omega$ with a finite set of points in $\mathbb{R}^{\frac{N(N+1)}{2}}$. Since every finite set of points in $\mathbb{R}^{\frac{N(N+1)}{2}}$ is an affine variety, there exists a finite set of polynomials $F=\{p\in\mathbb R[\mathcal G_N]\}$ such that $\Omega$ is the set of simultaneous roots of all polynomials in $F$, \textit{i.e.} $p[G]=0$ for all $G\in\Omega$ and all $p\in F$. Since $G_1\not\in \Omega$ there is at least one $f\in F$ such that $f(G_1)\neq0.$

Let now $g=\mathcal{R}(f)$. Then
\begin{align*}
    &g(G_2)=\frac{1}{|\S_N|}\sum_{\sigma\in\S_N}f(\sigma\cdot G_2)=0,~\text{as }f\text{ vanishes on all elements in }\mathcal{O}_{G_2},\\
    &g(G_1)=\frac{1}{|\S_N|}\sum_{\sigma\in\S_N}f(\sigma\cdot G_1)=\frac{|\text{Aut}(G_1)|}{|\S_N|}f(G_1)\neq 0,
\end{align*}
which follows from the fact that $f(\sigma\cdot G_1)=f(G_1)\neq 0 \Leftrightarrow \sigma\cdot G_1=G_1$. We have thus constructed an invariant that seperates $G_1$ and $G_2$. Since $\{i_1,...i_r\}$ are a generating set, we have that $g=h(i_1,...,i_r)$ for some polynomial $h$. The fact that $g(G_1)\neq g(G_2)$ implies that there must be at least one $i_k$ such that $i_k(G_1)\neq i_k(G_2)$, which completes the proof.
\end{proof}

\section{Minimal generating sets for $\mathcal{I}_3$ and $\mathcal{I}_4$}\label{appendix: minimal generating set}

\paragraph*{\textit{Minimal generating set for $n=3$ and $n=4$}.}
Using an implementation of King's algorithm in the computer algebra program SageMath presented in \cite{masters}, we explicitly computed a minimal generating set for the algebras of invariants $\mathcal{I}^3$ and $\mathcal{I}^4$ of weighted graphs with 3 and 4 nodes. The minimal generating set for $\mathcal{I}^3$ consists of 9 polynomials with degrees $(2,3,4)$, where the position in the tuple indicates the degree of the number of polynomials, see Figure \eqref{fig:mingenfor n=3}. Note that therefore $\beta(\mathcal{I}^3)=3$. We suppress use of the multiset index notation for conciseness, that is, $x_{ij}=x_{[i,j]}$.

\begin{figure}[H]
    \centering
    \includegraphics[width=0.7\linewidth]{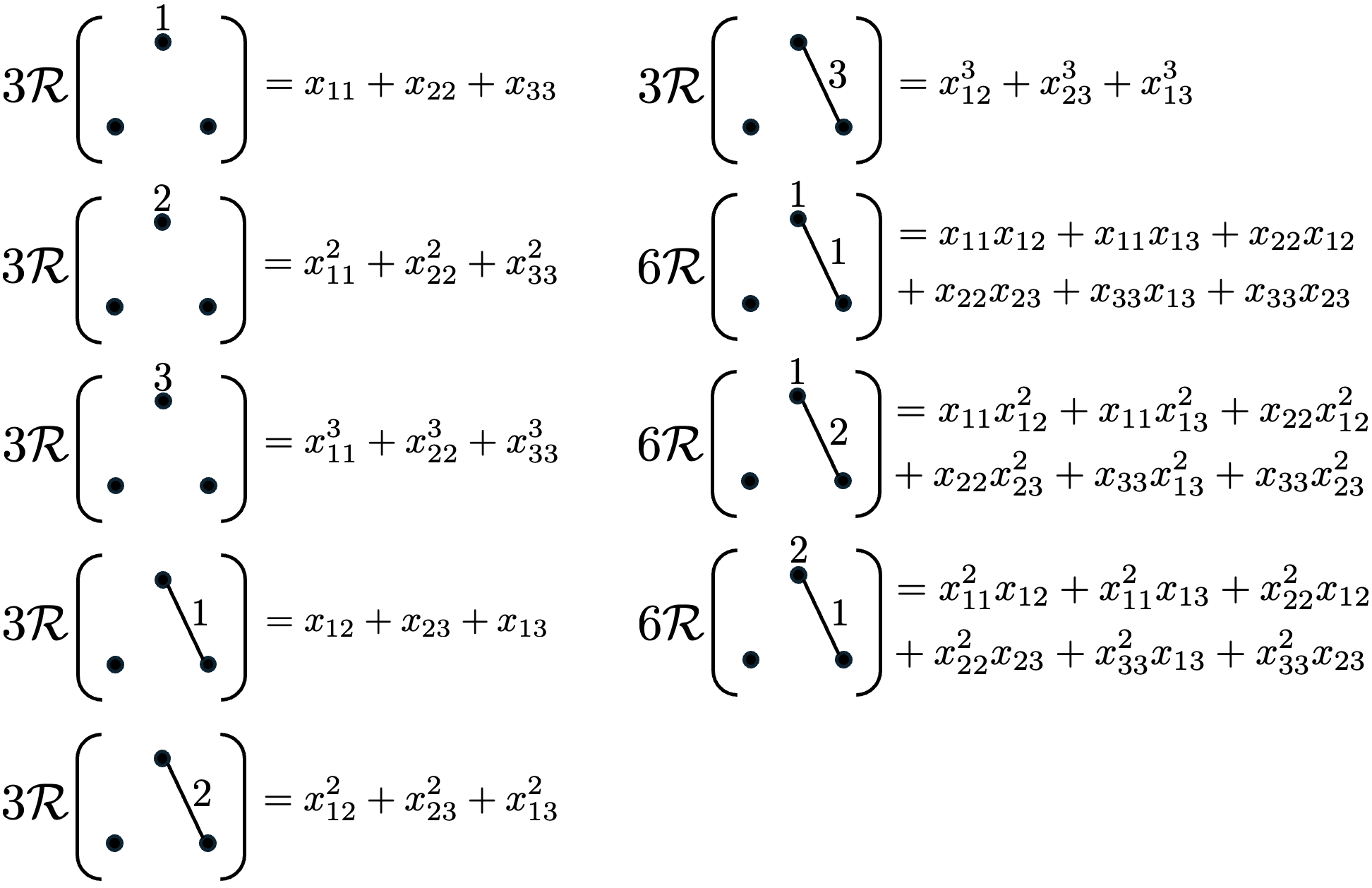}
    \caption{Minimal generating set for $\mathcal{I}^3$}
    \label{fig:mingenfor n=3}
\end{figure}

For $\mathcal{I}^4$, we found a minimal generating set consisting of 31 invariant polynomials with smallest degree bound $\beta(\mathcal{I}^4)=5$. The number of polynomials per degree is $(2,4,8,10,7)$, where the position in the tuple again corresponds to the degree of the polynomials. To increase readability, we denote the monomials corresponding to the node weights by ${A=x_{[1,1]}}, {B=x_{[2,2]}}, {C=x_{[3,3]}}, {D=x_{[4,4]}}$. The minimal generating set is then given by:

\begin{align*}
    &Q_1=A+B+C+D,~~ Q_2=x_{12}+x_{13}+x_{14}+x_{23}+x_{24}+x_{34}\\
    &P_1=A^2+B^2+C^2+D^2,~~ P_2=x_{12}^2+x_{13}^2+x_{14}^2+x_{23}^2+x_{24}^2+x_{34}^2,\\
&P_3=A(x_{12}+x_{13}+x_{14})+B(x_{12}+x_{23}+x_{24})+C(x_{13}+x_{23}+x_{34})+D(x_{14}+x_{24}+x_{34}),\\
&P_4=x_{12}x_{13}+x_{12}x_{14}+x_{13}x_{14}+x_{12}x_{23}+x_{13}x_{23}+x_{12}x_{24}+x_{14}x_{24}+x_{23}x_{24}+x_{13}x_{34}+x_{14}x_{34}+x_{23}x_{34}+x_{24}x_{34}\\
&R_1=A^3+B^3+C^3+D^3,~~ R_2=x_{12}^3+x_{13}^3+x_{14}^4+x_{23}^3+x_{23}^3+x_{24}^3+x_{34}^3\\
&R_3=A^2(x_{12}+x_{13}+x_{14})+B^2(x_{12}+x_{23}+x_{24})+C^2(x_{13}+x_{23}+x_{34})+D^2(x_{14}+x_{24}+x_{34})\\
&R_4=A(x_{12}^2+x_{13}^2+x_{14}^2)+B(x_{12}^2+x_{23}^2+x_{24}^2)+C(x_{13}^2+x_{23}^2+x_{34}^2)+D(x_{14}^2+x_{24}^2+x_{34}^2)\\
&R_5=ABx_{12}+ACx_{13}+ADx_{14}+BCx_{23}+BDx_{24}+CDx_{34}\\
&R_6=A(x_{12}x_{13}+x_{12}x_{14}+x_{13}x_{14})+B(x_{12}x_{23}+x_{12}x_{24}+x_{23}x_{24})+C(x_{13}x_{34}+x_{23}x_{34}+x_{13}x_{23})\\
&~~~~+D(x_{14}x_{24}+x_{14}x_{34}+x_{24}x_{34}),\\
&R_7=x_{12}x_{13}x_{23}+x_{12}x_{14}x_{24}+x_{13}x_{14}x_{34}+x_{23}x_{24}x_{34}\\
&R_8=x_{12}^2x_{13}+x_{12}^2x_{14}+x_{12}^2x_{23}+x_{12}^2x_{24}+x_{13}^2x_{23}+x_{13}^2x_{34}+x_{14}^2x_{24}+x_{14}^2x_{34}+x_{23}^2x_{24}+x_{23}^2x_{34}+x_{34}^2x_{23}+x_{34}^2x_{24}\\
&~~~~+(\text{exponents switched})\\
&S_1=A^4+B^4+C^4+D^4,~~ S_2=x_{12}^4+x_{13}^4+x_{14}^4+x_{23}^4+x_{24}^4+x_{34}^4\\
&S_3=A^3(x_{12}+x_{13}+x_{14})+B^3(x_{12}+x_{23}+x_{24})+C^3(x_{13}+x_{23}+x_{34})+D^3(x_{14}+x_{24}+x_{34})\\
&S_4=A^2(x_{12}^2+x_{13}^2+x_{14}^2)+B^2(x_{12}^2+x_{23}^2+x_{24}^2)+C^2(x_{13}^2+x_{23}^2+x_{34}^2)+D^2(x_{14}^2+x_{24}^2+x_{34}^2)\\
&S_5=A^2(x_{12}x_{13}+x_{12}x_{14}+x_{13}x_{14})+B^2(x_{12}x_{23}+x_{12}x_{24}+x_{23}x_{24})+C^2(x_{13}x_{34}+x_{23}x_{34}+x_{13}x_{23})\\
&~~~~+D^2(x_{14}x_{24}+x_{14}x_{34}+x_{24}x_{34})\\
&S_6=A(x_{12}^3+x_{13}^3+x_{14}^3)+B(x_{12}^3+x_{23}^3+x_{24}^3)+C(x_{13}^3+x_{23}^3+x_{34}^3)+D(x_{14}^3+x_{24}^3+x_{34}^3)\\
&S_7=ABx_{12}^2+ACx_{13}^2+ADx_{14}^2+BCx_{23}^2+BDx_{24}^2+CDx_{34}^2\\
&S_8=A(x_{12}^2x_{13}+x_{12}^2x_{14}+x_{13}^2x_{14}+\text{(exp. sw.)})+B(x_{23}^2x_{12}+x_{24}^2x_{12}+x_{23}^2x_{24}+(\text{exp. sw.}))\\
&~~~~+C(x_{34}^2x_{13}+x_{34}^2x_{23}+x_{23}^2x_{13}+(\text{exp. sw.}))+D(x_{24}^2x_{14}+x_{34}^2x_{14}+x_{34}^2x_{24}+(\text{exp. sw.}))\\
&S_9=A(x_{12}^2x_{23}+x_{13}^2x_{23}+x_{12}^2x_{24}+x_{14}^2x_{24}+x_{13}^2x_{34}+x_{14}^2x_{34})+B(x_{12}^2x_{13}+x_{12}^2x_{14}+x_{23}^2x_{13}+x_{23}^2x_{34}+x_{24}^2x_{14}+x_{24}^2x_{34})\\
&~~~~+C(x_{13}^2x_{12}+x_{13}^2x_{14}+x_{23}^2x_{12}+x_{23}^2x_{24}+x_{34}^2x_{14}+x_{34}^2x_{24})+D(x_{14}^2x_{12}+x_{14}^2x_{13}+x_{24}^2x_{12}+x_{24}^2x_{23}+x_{34}^2x_{13}+x_{34}^2x_{23})\\
&S_{10}=x_{12}^3x_{13}+x_{12}^3x_{14}+x_{12}^3x_{23}+x_{12}^3x_{24}+x_{13}^3x_{23}+x_{13}^3x_{34}+x_{14}^3x_{24}+x_{14}^3x_{34}+x_{23}^3x_{24}+x_{23}^3x_{34}+x_{34}^3x_{23}+x_{34}^3x_{24}\\
&~~~~+(\text{exponents switched})\\
&T_1=x_{12}^5+x_{13}^5+x_{14}^5+x_{23}^5+x_{24}^5+x_{34}^5\\
&T_2=A^3(x_{12}^2+x_{13}^2+x_{14}^2)+B^3(x_{12}^2+x_{23}^2+x_{24}^2)+C^3(x_{13}^2+x_{23}^2+x_{34}^2)+D^3(x_{14}^2+x_{24}^2+x_{34}^2)\\
&T_3=A^2Bx_{12}+A^2Cx_{13}+A^2Dx_{14}+B^2Cx_{23}+B^2Dx_{24}+C^2Dx_{34}\\
&~~~~+AB^2x_{12}+AC^2x_{13}+AD^2x_{14}+BC^2x_{23}+BD^2x_{24}+CD^2x_{34}\\
&T_4=A^2(x_{12}^3+x_{13}^3+x_{14}^3)+B^2(x_{12}^3+x_{23}^3+x_{24}^3)+C^2(x_{13}^3+x_{23}^3+x_{34}^3)+D^2(x_{14}^3+x_{24}^3+x_{34}^3)\\
&T_5=A^2(x_{12}^2x_{13}+x_{12}^2x_{14}+x_{13}^2x_{14}+\text{(exp. sw.)})+B^2(x_{23}^2x_{12}+x_{24}^2x_{12}+x_{23}^2x_{24}+(\text{exp. sw.}))\\
&~~~~+C^2(x_{34}^2x_{13}+x_{34}^2x_{23}+x_{23}^2x_{13}+(\text{exp. sw.}))+D^2(x_{24}^2x_{14}+x_{34}^2x_{14}+x_{34}^2x_{24}+(\text{exp. sw.}))\\
&T_6=A(x_{12}^4+x_{13}^4+x_{14}^4)+B(x_{12}^4+x_{23}^4+x_{24}^4)+C(x_{13}^4+x_{23}^4+x_{34}^4)+D(x_{14}^4+x_{24}^4+x_{34}^4)\\
&T_7=A(x_{12}^3x_{13}+x_{12}^3x_{14}+x_{13}^3x_{14}+\text{(exp. sw.)})+B(x_{23}^3x_{12}+x_{24}^3x_{12}+x_{23}^3x_{24}+(\text{exp. sw.}))\\
&~~~~+C(x_{34}^3x_{13}+x_{34}^3x_{23}+x_{23}^3x_{13}+(\text{exp. sw.}))+D(x_{24}^3x_{14}+x_{34}^3x_{14}+x_{34}^3x_{24}+(\text{exp. sw.}))
\end{align*}

Note that while this minimal generating set contains 31 polynomials, only 8 different graph structures appear. These are depicted in Figure \eqref{fig:graphstruc}, where $n,m,k\in\mathbb{N}$ represent the different weights corresponding to each element in the minimal generating set.

\begin{figure}[H]
    \centering
    \includegraphics[width=0.7\linewidth]{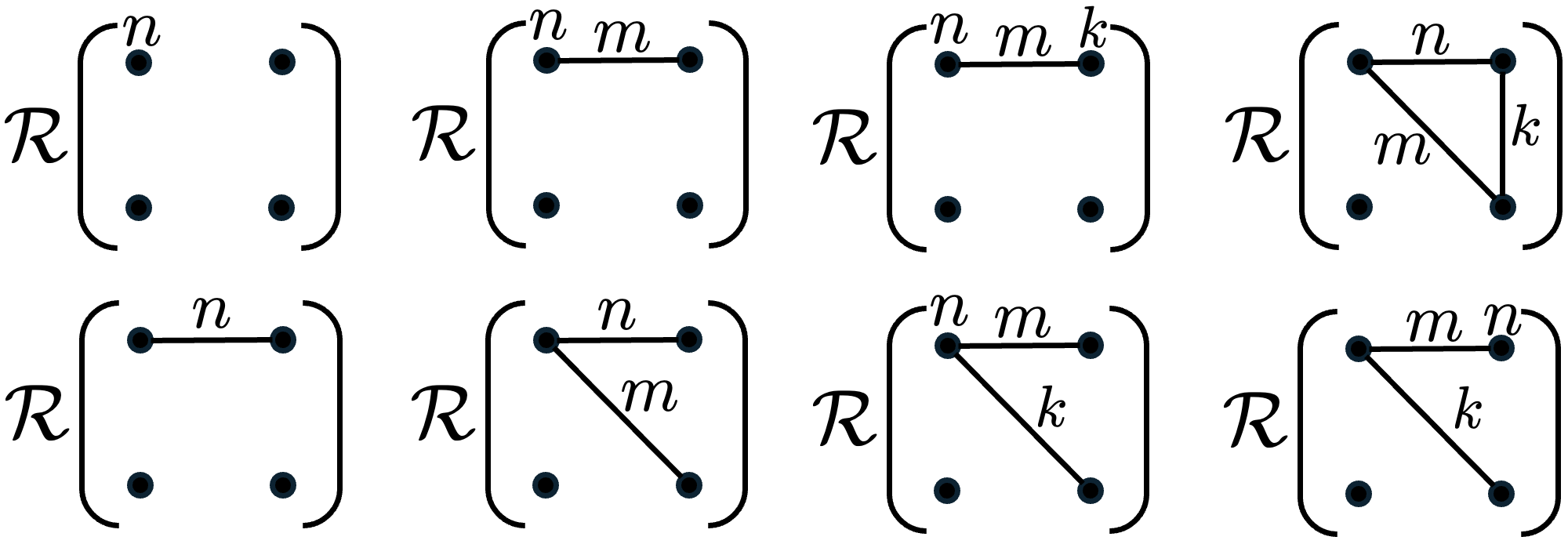}
    \caption{Types of invariant graph polynomials appearing in the minimal generating set for $\mathcal{I}^4$ given above.}
    \label{fig:graphstruc}
\end{figure}

\end{document}